 \author{Shuang Gao % <-this % stops a space
 %\thanks{*This work was not supported by any organization}% <-this % stops a space
 \thanks{Shuang Gao is with the Simons Institute for the Theory of Computing at University of California, Berkeley, CA, USA, 94720. 
Email: {\tt\small shuang.gao@berkeley.edu,~sgao@cim.mcgill.ca}.  %Major parts of this paper are completed while the author was with McGill University and Carleton University. 
 %This work is supported by the U.S. ARL  under grant W911NF1910110.
}%
  \thanks{The author would like to thank Peter E. Caines and  Minyi Huang for valuable discussions and feedback on this work, and thank Matthew O. Jackson for conversations about several important references related to this work and for sharing his insights.}
    \thanks{The author gratefully acknowledge the supported by the U.S. ARL and ARO grant W911NF1910110,  the U.S. AFOSR grant FA9550-19-1-0138, and the Simons-Berkeley Research Fellowship.}
  }
\newcommand{\FA}{\mathbf{A}}   % Function A
\newcommand{\FB}{\mathbf{B}}   % Function B
\newcommand{\FD}{\mathbf{D}}   % Function D
\newcommand{\Fx}{\mathbf{x}}	   % Function x
\newcommand{\Fv}{\mathbf{v}}	   % Function v
\newcommand{\FW}{\mathbf{W}}
\newcommand{\BR}{\mathds{R}}  % Space of Real Line
\newcommand{\ESZ}{\mathcal{W}_0}
\newcommand{\ESO}{\mathcal{W}_1}
\newcommand{\ESC}{\mathcal{W}_c}
\newcommand*\TRANS{{\mathpalette\doTRANS\empty}}
\newcommand*\doTRANS[2]{\raisebox{\depth}{$\m@th#1\intercal$}}
\begin{document}
%
% paper title
% Titles are generally capitalized except for words such as a, an, and, as,
% at, but, by, for, in, nor, of, on, or, the, to and up, which are usually
% not capitalized unless they are the first or last word of the title.
% Linebreaks \\ can be used within to get better formatting as desired.
% Do not put math or special symbols in the title.
\title{Fixed-Point Centrality for Networks} % and Graphons
%
%
% author names and IEEE memberships
% note positions of commas and nonbreaking spaces ( ~ ) LaTeX will not break
% a structure at a ~ so this keeps an author's name from being broken across
% two lines.
% use \thanks{} to gain access to the first footnote area
% a separate \thanks must be used for each paragraph as LaTeX2e's \thanks
% was not built to handle multiple paragraphs
%
%
%
\maketitle 
\begin{abstract}
This paper proposes a family of network centralities called fixed-point centralities. This centrality family is defined via the fixed point of permutation equivariant mappings related to the underlying network. Such a centrality notion is immediately extended to define fixed-point centralities for infinite graphs characterized by graphons.  Variation bounds of such centralities with respect to the variations of the underlying graphs and graphons under mild assumptions are established.    
Fixed-point centralities connect with a variety of different models on networks including graph neural networks, static and dynamic games on networks, and Markov decision processes.

\end{abstract}

\section{Introduction}
Centrality which quantifies the ``importance'' or ``influence'' of nodes on  networks is a useful concept in social network analysis~\cite{faust1997centrality,landherr2010critical,jackson2010social} %
 and it also finds applications in biological, technological and economics networks (see e.g. \cite{koschutzki2008centrality,wang2010electrical,ballester2006s,jackson2010social}). 
 Plenty of centralities with different properties are defined for different problems (see e.g. \cite{newman2018networks,bloch2016centrality}), such as, degree centrality, eigencentrality \cite{bonacich1972technique},  Katz-Bonacich centrality \cite{katz1953new,bonacich1987power,bonacich2001eigenvector}, PageRank centrality \cite{brin1998anatomy},  Shapley value \cite{shapley1953value}, closeness centrality \cite{bavelas1950communication}, betweenness centrality \cite{freeman1977set}, diffusion centrality \cite{banerjee2013diffusion},  among others.
  These centralities provide  a collection of different quantitative measures for the ``importance'' or ``influence'' of nodes on  networks associated with various application contexts.
For instance,   the quality of a website may be modelled by the PageRank centrality \cite{brin1998anatomy},   the importance of individuals on social influence networks may be reflected by eigencentrality in \cite{bonacich1972technique}, equilibrium actions of certain static network games correspond to Katz-Bonacich centrality \cite{ballester2006s},  contribution values in a coalition game may be represented by Shapley values \cite{shapley1953value}, and so on.
Many social, technological and biological networks are growing and varying in terms of nodes and (or)  connections and hence centrality values may vary accordingly.  
Properties of such variations of centrality values with respect to the variations of graphs (see \cite{avella2018centrality}) motivate the current work.  %
A second motivation is to identify a suitable centrality notion for dynamic game problems on networks and graphons (\cite{bacsar1998dynamic,ShuangPeterMinyiCDC21,ShuangPeterMinyiTAC21,PeterMinyiSIAM21GMFG}). A third motivation  is the search of a class of new centralities for centrality-weighted opinion dynamics models proposed in \cite{ShuangCDC21}.% which can be verified based on data. 

\subsection{Related work} 
The formulation of fixed-point centrality in this paper follows the idea of the seminal work on graph neural network models (\cite{gori2005new,scarselli2009graph}) in using  fixed points of some underlying mappings associated with networks. 
Fixed-point characterizations find  applications in many problems in data science, including graph neural networks (\cite{gori2005new,scarselli2009graph}),  implicit neural networks (\cite{gu2020implicit,el2021implicit,jafarpour2021robust,davydov2021non}), deep equilibrium models \cite{bai2019deep}, among others
 \cite{combettes2021fixed}. 
A first salient feature of fixed-point centralities that distinguishes themselves from these models above is that permutation equivariance properties must be satisfied.
Another salient feature of fixed-point centralities is that the values of fixed-point centralities are restricted to real numbers to allow natural rankings of the nodes and are restricted to non-negative numbers to allow interpretations (after normalizations) as  probability distributions. 
Furthermore, the current paper focuses on variations of the fixed point (centralities)  with respect to the variations of graph structures and weights, which differs from  \cite{gori2005new,scarselli2009graph,gu2020implicit,el2021implicit,jafarpour2021robust,davydov2021non,bai2019deep}.

Centralities and graph neural networks are respectively generalized in \cite{avella2018centrality} and  \cite{ruiz2020graphon}  to those for infinite graphs characterized by graphons (developed in \cite{lovasz2006limits,borgs2008convergent,borgs2012convergent}  to characterize dense graph sequences and  their limits).
The work \cite{avella2018centrality} studies the eigencentrality, PageRank centrality, Katz-Bonacich centrality of symmetric graphs generated from  graphons and establishes the rate of convergence of these centralities to the associated graphon centralities. 
The fixed-point centrality for graphon  in the current paper provides a unified view towards these centralities. 
The graphon versions of graph neural networks as approximations or generalization models of graph neural networks are proposed and analyzed in \cite{ruiz2020graphon}. One modelling difference is that the graphon neural networks are characterized by  layered structures in \cite{ruiz2020graphon} whereas in the current paper fixed-point equilibrium structures are employed.
\subsection{Contribution}
We propose the ``\emph{fixed-point centrality}'', which is a class of centralities that can be constructed via a (permutation equivariant) fixed-point mapping associated with the underlying graph. 
This class of centralities unifies many different centralities (including PageRank centrality, eigencentrality, and Katz-Bonacich centrality) and furthermore it connects to a variety of different problems including graph neural networks \cite{gori2005new}, and LQG mean field games on networks \cite{ShuangPeterMinyiCDC21}. 
In addition, fixed-point centralities are applicable to a broader class of graphs, whether they are undirected or directed, unweighted or weighted (with possibly negative weights), finite or infinite.
Moreover,  variation bounds of fixed-point centralities with respect to the variations of the underlying graphs are established  under mild assumptions following a rather simple idea based on fixed-point analysis.

\emph{Notation}:
   $\BR$ and $\BR_{\geq 0}$ denote respectively reals  and non-negative reals. For $A \in \BR^{n\times n}$, $\mathcal{G}(A)$ denotes the graph with the adjacency matrix $A$ and the node set $[n]\triangleq \{1,...,n\}$. $\mathcal{G}(V,E)$ denotes the graph with vertex set $V$ and edge set $E\subset V\times V$.  For a vector $v\in \BR^n$, $\text{span}(v)\triangleq\{\alpha v: \alpha \in \BR^n\}$. $\mathbf{1}_n$ denotes the $n$-dimensional column vector of $1$s and $\mathbf{1}$ denotes the function defined over $[0,1]$ with $\mathbf{1}_\alpha =1$ for all $\alpha\in [0,1]$. We use the word ``network'' to refer to an interconnected group or system where the connection structures along with weights can be characterized by some graph $\mathcal{G}(A)$. For a vector $v\in \BR^n$, $\text{diag}(v)$ denotes the $n\times n$ diagonal matrix with the elements of $v$ on the main diagonal, $[v]_i$ (or $v_i$) denotes the $i$th element of $v$,   
    $ \left\|{v} \right\|_{p}\triangleq\left(\sum _{i=1}^{n}\left|v_{i}\right|^{p}\right)^{1/p}$ with $1\leq p< \infty$, \textup{and} $\|v\|_\infty  \triangleq \max_{i\in [n]} |v_i|. $ For a matrix $A\in\BR^{n\times n}$,   $[A]_{ij}$ (or $a_{ij}$) denotes the $ij$th element of  $A$ and  
    $\|A\|_{p}\triangleq \sup_{v\in \BR^n, v\neq 0} \frac{\|Av\|_p}{\|v\|_p}$ with  $1\leq p\leq \infty$. We note that  $\|A\|_{1}=\max _{1\leq j\leq n}\sum _{i=1}^{m}|a_{ij}| ~  \text{and} ~   \|A\|_{2} ={\sqrt {\lambda _{\max }\left(A^{*}A\right)}}.$  
\section{Preliminaries on Centralities}
	A \emph{centrality} for a network characterized by a graph $\mathcal{G}(V,E)$ is  a mapping $\rho : V \to \BR_{\geq 0}$ that provides a quantification of ``importance'' or ``influence'' of nodes on the network. 
%\end{definition}
It is worth emphasizing that the ``importance" or ``influence'' of nodes on networks is defined differently under different application contexts. Hence for the same graph structure and graph weights, various centralities can be defined and may be very different from one another.  
The choice of the range $\BR_{\geq 0}$ from centralities  allows a natural ranking of nodes.
{The fundamental idea of centralities is to summarize the  information about a two-variable function characterized by a graph (or a matrix) into a one-variable function characterized by a centrality (or a vector).} 

 We review several centralities related to the current paper. %
\subsection{Centralities for Finite Networks}
Consider a  graph $\mathcal{G}(A)$ with non-negative adjacency matrix $A=[a_{ij}] \in \BR^{n\times n}$. Depending on $A$, the graph $\mathcal{G}(A)$ may be directed or undirected, and  weighted or unweighted. 
\begin{enumerate}[\bf (E1)]

\item{Eigencentrality} (proposed in \cite{bonacich1972technique}):
Assume the largest eigenvalue $\lambda_1$ of $A$ is simple (i.e.  $\lambda_1$ has multiplicity~$1$).  Then the eigencentrality of $\mathcal{G}(A)$ is given by
\[
\rho_i =  \left[\lim_{k\to \infty}\Big(\frac1{\lambda_1} A^\TRANS\Big)^k \mathbf{1}_n
\right]_i,  \quad   i \in [n].
\] An equivalent form in terms of local connections is  
\[\rho_{i}=\frac1{\lambda_1} \sum _{j=1}^{N}a_{ji} \rho_{j},~~ i\in[n],\quad \text{i.e.}\quad  \rho = \frac1{\lambda_1} A^\TRANS \rho.\]
%(The Perron–Frobenius Theorem implies that    $\lambda>0$ and $\rho_i>0$, $i\in [n]$ for irreducible non-negative matrix $A$.
%)
\item {Katz-Bonacich centrality} with $\alpha \in (0,1)$ (proposed in \cite{katz1953new} and generalized in \cite{bonacich1987power,bonacich2001eigenvector}):
Let  $\alpha <\|A\|_2^{-1}$. One (simplest) Katz-Bonacich centrality is given by
\begin{equation*}
	\begin{aligned}
\rho_i &= \sum_{k=0}^{\infty}\sum_{j=1}^n \alpha^k [A^k]_{ji}
	 = \left[ \sum_{k=0}^{\infty} \alpha^kA^{\TRANS ^k}  \mathbf{1}_n \right]_i, \quad i \in [n],
\end{aligned}
\end{equation*}
 where the upper bound of $\alpha$ ensures the boundedness of the infinite series.
An equivalent form in terms of local connections is given by 
\[\rho_{i}=\alpha \sum _{j=1}^{n}a_{ji}\rho_{j} +1, ~ i \in [n], \quad \text{i.e.}\quad \rho  = \alpha A^\TRANS \rho + \mathbf{1}_n,
\]
and the equivalent explicit form is  
$
\rho  =  (1-\alpha A^\TRANS)^{-1} \mathbf{1}_n. 
$
%\vspace{0.5em}
%
\item {PageRank centrality} (proposed in \cite{brin1998anatomy}): Consider a network of webpages, where each node represents a webpage, and  $a_{ji}=1$ if there is a hyperlink from webpage $j$ to $i$ and $a_{ij}=0$ otherwise \cite{brin1998anatomy}.
  PageRank centrality with $\alpha \in(0,1)$ is given by
\[
\rho_{i}=\alpha \sum_{j=1}^na_{ji}{\frac {\rho_{j}}{d_j}}+{\frac {1-\alpha }{n}},\quad d_j=\sum _{i=1}^na_{ji},  \quad i \in [n],
\]
where  \textcolor{black}{$\alpha \frac{a_{ji}}{{d_j}}\rho_j+ (1-\alpha)n^{-2}$  is the probability of jumping from node $j$ to node $i$} in the steady state of the associated random walk.  
 In equivalent forms, PageRank centrality $\rho$ satisfies
\[
\rho = \alpha A^\TRANS D^{-1}  \rho + \frac{1-\alpha}{n}\mathbf{1}_n, ~~ \text{with}~ D\triangleq\textup{diag}(d_1,...,d_n)
\]
and the explicit computation form is  given by 
\[
\rho  = \frac{1-\alpha}{n}(I-\alpha A^\TRANS D^{-1})^{-1} \mathbf{1}_n.
\]
  PageRank centrality can be interpreted as the steady state distribution of random walks on the network. %
%\vspace{0.5em}
%
\end{enumerate}
\subsection{Centralities for Graphons}

 % from wiki
%\subsubsection{Graphon Centralities}
\emph{Graphons} are defined as bounded symmetric measurable functions $\FA: [0,1]^2 \to [0,1]$, which, roughly speaking, can be  viewed as the ``adjacency matrix" of  graphs with the vertex set $[0,1]$  (see \cite{lovasz2012large}). Let $\ESZ$ denote the set of graphons with the range $[0,1]$. A graphon $\FA \in \ESZ$ can be interpreted as an integral operator (for instance from $L^2([0,1])$ to $L^2([0,1])$) as follows: 
 \[
 (\FA \Fv)(\cdot) = \int_{[0,1]} \FA(\cdot, \alpha)\Fv_\alpha d\alpha, \quad \Fv \in L^2([0,1]).
 \]
The definitions of eigenvector, PageRank and Katz-Bonacich centralities for graphons  in \cite{avella2018centrality} are summarized below. Consider a graphon $\FA \in \ESZ$. 
\begin{enumerate}
	\item [\bf(E4)] The graphon eigencentrality for $\FA$ is given by 
\[
\rho = \frac1{\lambda_1} \FA \rho, \quad (\FA\rho)(\cdot) \triangleq \int_{[0,1]}\FA(\cdot, \alpha) \rho_\alpha d\alpha,
\]
where $\rho$ denotes the eigenfunction in $L^2([0,1])$ associated to the largest eigenvalue $\lambda_1$ of $\FA$ and $\lambda_1$ is assumed to have multiplicity $1$. 
\item[\bf (E5)] The graphon Katz-Bonacich centrality with $\alpha \in (0,1)$  for $\FA$ is defined by one of the  equivalent forms:
\begin{equation*}
	\begin{aligned}
%\rho(x) & = \sum_{k=1}^{\infty}\int_{[0,1]} \alpha^k \FA^k(x,y) dy, \quad x\in [0,1] \\
%\rho(x) &= \alpha \int_{[0,1]} \FA(x,y) (\rho(y)+1) dy, \quad x\in [0,1],\\
\rho  = \sum_{k=1}^{\infty}\alpha^k  \FA^k\mathbf{1},   ~ \rho =  (I - \alpha \FA)^{-1} \mathbf{1}, \text{or }\rho  =   ~\alpha \FA \rho + \mathbf{1}
%\sum_{j=1}^n \alpha^k (A^k)_{ji}
%	 = \left[ \sum_{k=1}^{\infty} \alpha^k A^k \mathbf{1}\right]_i, \quad i \in [n]
\end{aligned}
\end{equation*}
where $\alpha < \frac{1}{\lambda_{1}}$ and $\lambda_{1}$ is the largest eigenvalue of $\FA$. 
\item[\bf (E6)] The graphon {PageRank centrality} with $\alpha \in (0,1)$ for  $\FA$ is given by
\begin{equation}\label{eq:page-rank-graphon}
	\rho = \alpha \FA \odot \FD^{-1} \rho + ({1-\alpha})\mathbf{1}, \quad \FA \in \ESZ,  
\end{equation}
where $\FD(x) = \int_{[0,1]} \FA(y,x)dy, $
and  $(\FA\odot \FD^{-1})(x)= \frac{\FA(x,y)}{\FD(y)}$ if $\FD(y)\neq 0$,  and zero otherwise. %, and $\alpha<\|\FA \odot \FD^{-1}\|_{\textup{op}}^{-1}$ is assumed to ensure the centrality is unique.
Equivalent representation forms are as follows:
$
\rho = (1-\alpha)(I-\alpha \FA\odot \FD^{-1})^{-1}\mathbf{1}
$ and
$
\rho = (1-\alpha)\sum_{k=0}^\infty \big(\alpha \FA \odot \FD^{-1}\big)^k\mathbf{1}.
$
\end{enumerate}
\begin{proposition}
The graphon PageRank centrality $\rho$  with $\alpha \in(0,1)$ is a probability density function over $[0,1]$. 	
\end{proposition}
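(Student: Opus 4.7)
The plan is to verify the two defining properties of a probability density function for $\rho$: non-negativity almost everywhere on $[0,1]$, and $\int_{[0,1]} \rho(x)\,dx = 1$.

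For non-negativity, I would use the Neumann-series representation
$$
\rho = (1-\alpha) \sum_{k=0}^{\infty} \bigl(\alpha\, \FA \odot \FD^{-1}\bigr)^{k} \mathbf{1}
$$
listed with the definition (E6). Since $\FA$ is a non-negative kernel and $\FD \geq 0$, the operator $\FA \odot \FD^{-1}$ is a positive operator on $L^2([0,1])$; applied to the non-negative function $\mathbf{1}$ it yields non-negative functions at every iterate. Combined with $\alpha \in (0,1)$ and $1-\alpha > 0$, each partial sum is non-negative, hence so is $\rho$. Moreover the $k=0$ term already gives $\rho \geq (1-\alpha)\mathbf{1} > 0$, so positivity is automatic on the set where the series converges.

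For the integral condition, I would integrate the fixed-point relation \eqref{eq:page-rank-graphon} over $[0,1]$. The central computation exploits the symmetry of $\FA$ and the very definition of $\FD$: for almost every $y$ with $\FD(y) \neq 0$, by Fubini
$$
\int_{[0,1]} (\FA \odot \FD^{-1})(x,y)\, dx \;=\; \frac{1}{\FD(y)} \int_{[0,1]} \FA(x,y)\,dx \;=\; \frac{\FD(y)}{\FD(y)} \;=\; 1.
$$
Applying Fubini again and using that $\rho$ vanishes on $\{\FD = 0\}$ only up to the natural convention (or equivalently, noting that the contribution of $\{\FD = 0\}$ to the kernel integral is zero by definition and can be discarded on both sides since $\FA(x,y)=0$ a.e.\ there by symmetry), one obtains
$$
\int_{[0,1]} \rho(x)\,dx \;=\; \alpha \int_{[0,1]} \rho(y)\,dy \;+\; (1-\alpha).
$$
Solving this scalar equation for the total mass gives $\int_{[0,1]} \rho(x)\,dx = 1$, as required.

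The main obstacle I anticipate is bookkeeping on the null set $\{y : \FD(y) = 0\}$, where the kernel $\FA \odot \FD^{-1}$ is set to zero by convention. The clean way to handle this is to observe that $\FD(y) = 0$ forces $\FA(\cdot, y) = 0$ a.e.\ (so the ``column'' contributes nothing to either side of the integrated identity), which reconciles the Fubini computation with the convention in (E6) and yields the identity above without any hidden assumption on $\FA$. A secondary technicality, convergence of the Neumann series, is justified by the contraction $\alpha < 1$ combined with the fact that $\FA \odot \FD^{-1}$ has $L^\infty$-induced operator norm at most one (it maps $\mathbf{1}$ componentwise into $[0,1]$), which is exactly the same ingredient used for the integral computation.
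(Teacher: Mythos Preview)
Your proposal is correct and follows essentially the same approach as the paper: both use the Neumann-series representation to obtain non-negativity, and both integrate the fixed-point equation \eqref{eq:page-rank-graphon} (the paper phrases this as the inner product $\langle \mathbf{1},\rho\rangle = \alpha\langle \mathbf{1},\FA\odot\FD^{-1}\rho\rangle + (1-\alpha)$) to derive the scalar identity $\int\rho = \alpha\int\rho + (1-\alpha)$. Your version is simply more explicit about the Fubini step, the role of the symmetry of $\FA$, and the handling of the null set $\{\FD=0\}$, all of which the paper leaves implicit.
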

\begin{proof}
From the equivalent form $
\rho = (1-\alpha)\sum_{k=0}^\infty \big(\alpha \FA \odot \FD^{-1}\big)^k\mathbf{1}
$, we obtain that $\rho(x)\geq 0$ for all $x\in[0,1]$ for $\FA \in \ESZ$. Furthermore, based on \eqref{eq:page-rank-graphon}, we verify that
$$
 \langle \mathbf{1}, \rho\rangle = \langle \mathbf{1}, \alpha  \FA \odot \FD^{-1} \rho \rangle   + (1-\alpha)\langle \mathbf{1}, \mathbf{1}\rangle  =1. 
$$
 Thus $\rho$ is a probability density.
%and 
\end{proof}
\section{Fixed-Point Centrality for Finite Networks}
A \emph{permutation matrix} is a square matrix that has exactly one element of $1$ in every row and every column and $0$s elsewhere. An $n\times n$-dimensional permutation matrix $P_\pi$
 can be obtained by permuting the rows of an $n\times n$ identity matrix according to the permutation map $\pi:[n]\to [n]$.   For any permutation map $\pi:[n]\to [n]$, its associated permutation matrix $P_\pi$ is orthonormal, that is, $P_\pi^\TRANS P_\pi = I$. % $\pi \in \Pi$ of

\begin{definition}[Permutation Equivariance]
A mapping $f(\cdot, \cdot): \BR^{n\times n} \times \BR^n \to \BR^n$ is \emph{permutation equivariant with respect to a permutation map $\pi:[n]\to [n]$} if 
$$
P_\pi f(A , \rho)  = f(P_\pi A P^\TRANS_\pi, P_\pi \rho), \quad \forall \rho \in \BR^n, ~{\forall A \in \BR^{n\times n}}
$$
where  $P_\pi $ is the permutation matrix corresponding to $\pi$. 
	A mapping $f(\cdot, \cdot): \BR^{n\times n} \times \BR^n \to \BR^n$ is \emph{permutation equivariant} if it is {permutation equivariant} with respect to all permutation maps $\pi:[n]\to [n]$.
\end{definition}

\begin{definition}[Permutation Invariance]
	A mapping $f(\cdot, \cdot): \BR^{n\times n} \times \BR^n \to \BR^n$ is \emph{permutation invariant with respect to permutation map $\pi:[n]\to [n]$} if
   $$
f(A , \rho)  = f(P_\pi A P^\TRANS_\pi, P_\pi \rho),  ~~{\forall \rho \in \BR^n, ~\forall A \in \BR^{n\times n}}
$$
where  $P_\pi $ is the permutation matrix corresponding to $\pi$. 
	A mapping $f(\cdot, \cdot): \BR^{n\times n} \times \BR^n \to \BR^n$ is \emph{permutation invariant}  
if it is permutation invariant with respect to all permutation maps $\pi:[n]\to [n]$.
\end{definition}
Similarly, a mapping $g(\cdot):\BR^n \to \BR^n$ is \emph{permutation equivariant} (resp. \emph{permutation invariant})  if $P_\pi g(\rho) = g(P_\pi \rho)$ (resp. $g(\rho) = g(P_\pi \rho)$) for all $\rho \in \BR^n$ and all permutation maps $\pi:[n]\to [n]$. 

Permutation  equivalence  and permutation invariance are important properties of many functions associated with DeepSets \cite{zaheer2017deep} and  graph neural networks  \cite{bronstein2021geometric}. 
{Consider a network, the structure of which is characterized by a graph $\mathcal{G}(A)$ with the adjacency matrix $A \in \BR^{n\times n}$ (which may have negative weights) and the node set $[n]$.  $S$ denotes a set of nodal features and $S^n$ denotes its $n$-fold Cartesian  product.  Let $S^n$ be associated with a metric~$d$.}
\begin{definition}[Fixed-Point Centrality] \label{def:fpc-net}
	A centrality $\rho:[n]\to \BR_{\geq 0}$  is a \emph{fixed-point centrality} for $\mathcal{G}(A)$ associated with the feature space $(S^n, d)$ if there exists a permutation equivariant  mapping $f(\cdot, \cdot): \BR^{n\times n} \times S^n \to S^n$,  a permutation equivariant mapping $g(\cdot):  S^n \to \BR^n_{\geq0}$, and  \textcolor{black}{a unique $x \in S^n$ under the metric $d$} such that 
\begin{equation}\label{eq:fpc-def}
\begin{aligned}
	&x = f(A, x),  ~~ x \in S^n,  \\%\quad \text{and} \quad  \sum_{j=1}^n \rho_j =1.
	& \rho = g(x), ~~~~ \rho \in \BR^n_{\geq 0}.
\end{aligned}
\end{equation}
\end{definition}
We note the (symmetric or asymmetric) adjacency matrix $A=[a_{ij}]$ of $\mathcal{G}(A)$ is allowed to have non-negative elements. % %
The choices of $f$ and $g$ are contingent to the network application context and hence different fixed-point centralities may be associated with the same underlying graph $\mathcal{G}(A)$. 

The existence of the fixed-point feature is assumed in the definition of fixed-point centrality, which, 
with extra assumptions, can be established via various fixed-point theorems \cite{agarwal2001fixed} (see, for instance, \cite{lyu2021centrality} based on Kakutani fixed-point theorem and \cite{scarselli2009graph} based on Banach fixed-point theorem). The uniqueness of the fixed-point feature depends on the properties of both $A$ and $f$, as it is determined by $f(A,\cdot): S^n \to S^n$.  
	 Thus, for the same permutation equivariant mapping $f(\cdot, \cdot)$,
	a  different $A$ may result in the non-uniqueness (or even non-existence) of the fixed-point feature. %  the fixed-point centrality may not exists for some $A$. 
	 To enforce uniqueness of the fixed-point feature (when it exists), one way is to select a suitable feature set $S$ along with the metric $d$ for the product space $S^n$ such that uniqueness is defined up to equivalent classes (see Remark \ref{eq:linear-case} below for an example).

\begin{remark}[Linear Case]\label{eq:linear-case}
	When $S$ is a vector space and $f(A,\cdot)$ is a linear function from $S^n$ to $S^n$  (that is,  $f(A, x_1+ x_2) = f(A, x_1)+ f(A,x_2)$  and $f(A, \alpha x_1) = \alpha f(A, x_1)$ for any $x_1, x_2 \in S^n$, $\alpha \in \BR$) for any $A \in \BR^{n\times n}$,  the unique $x \in S^n$ in \eqref{eq:fpc-def} should be interpreted as the unique $1$-dimensional subspace, or in other words, $x \in S^n$ is unique up to its linear span;  a formal way to have the uniqueness is to extend the feature vector space $S^n$ to the Grassmannian $Gr(1, S^n)$ (i.e. the space of $1$-dimensional linear subspaces in $S^n$) and use a distance for $Gr(1, S^n)$ (see e.g. \cite{ye2016schubert}). 
\end{remark}

%\begin{remark}[Connections with Graph Neural Networks]\label{rem:centrality-gnn}
Fixed-point centralities can be viewed as a specialization of  graph neural network models in \cite{gori2005new,scarselli2009graph} to the case where outputs are characterized by non-negative reals and {initial nodal labels there are homogenous}. % , and nodal features by a fixed point mapping. keriven2019universal,hamilton2020graph
Centralities in $\BR_{\geq 0}$ naturally allow ranking nodes  according to their centrality values, whereas in general  outputs of graph neural networks require extra constructions to allow such ranking.
\subsection{Examples of Fixed-Point Centrality}
\begin{proposition}
Eigencentrality, Katz-Bonacich centrality, \textcolor{black}{and} PageRank centrality are  fixed-point centralities. 
\end{proposition}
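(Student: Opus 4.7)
The plan is to treat each of the three centralities separately and, in each case, exhibit the triple $(f, g, S^n)$ required by Definition~\ref{def:fpc-net}, then verify (i) permutation equivariance of $f$ and $g$, and (ii) existence and uniqueness of the fixed-point feature under the stated assumptions on $A$ and on the tuning parameters. In every case I take $S = \BR$ (so $S^n = \BR^n$) and let $g$ be the componentwise identity, possibly composed with a rescaling, so that the burden of permutation equivariance and uniqueness falls entirely on $f$.

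For Katz--Bonacich centrality I would set $f(A,x) = \alpha A^{\TRANS} x + \mathbf{1}_n$. Permutation equivariance reduces to the two identities $(P_\pi A P_\pi^{\TRANS})^{\TRANS} = P_\pi A^{\TRANS} P_\pi^{\TRANS}$ and $P_\pi \mathbf{1}_n = \mathbf{1}_n$, which combined with $P_\pi^{\TRANS} P_\pi = I$ give the claim directly. Existence and uniqueness of a fixed point follow because $\alpha < \|A\|_2^{-1}$ makes $I - \alpha A^{\TRANS}$ invertible (equivalently $\alpha A^{\TRANS}$ is a contraction in $\|\cdot\|_2$, so Banach's theorem applies). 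Non-negativity is then read off from the Neumann series $\rho = \sum_{k=0}^{\infty} \alpha^k (A^{\TRANS})^k \mathbf{1}_n$ using $A \geq 0$. For PageRank I would argue analogously with $f(A,x) = \alpha A^{\TRANS} D(A)^{-1} x + \tfrac{1-\alpha}{n}\mathbf{1}_n$, where $D(A) = \textup{diag}(A^{\TRANS}\mathbf{1}_n)$; here the additional step is to check that $D(P_\pi A P_\pi^{\TRANS}) = P_\pi D(A) P_\pi^{\TRANS}$, which is immediate from the formula, and to observe that $\|\alpha A^{\TRANS} D^{-1}\|_1 \leq \alpha < 1$ because $A^{\TRANS} D^{-1}$ is column-stochastic, so Banach's theorem again gives a unique non-negative fixed point.

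The eigencentrality case is the delicate one and I expect it to be the main obstacle. Take $f(A,x) = \lambda_1(A)^{-1} A^{\TRANS} x$. The spectral radius $\lambda_1(\cdot)$ is itself permutation invariant ($\lambda_1(P_\pi A P_\pi^{\TRANS}) = \lambda_1(A)$), so $f$ is permutation equivariant by the same algebra as above. However, $f(A,\cdot)$ is linear, so any scalar multiple of an eigenvector is again a fixed point, and uniqueness in $\BR^n$ fails. The remedy, flagged in Remark~\ref{eq:linear-case}, is to work in the Grassmannian $Gr(1,\BR^n)$ endowed with a suitable metric $d$, in which case Perron--Frobenius (applied to $A^{\TRANS}$, with the hypothesis that $\lambda_1$ is simple) furnishes a unique one-dimensional invariant subspace containing a non-negative representative. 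The readout $g$ is then the choice of the non-negative representative of unit $\ell_1$- or $\ell_2$-norm, which is manifestly permutation equivariant.

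Once the three cases are written out in this way, the proposition follows by inspection. The conceptual obstacle is purely bookkeeping: making sure the feature space, its metric, and the notion of ``unique fixed point'' are set up so that the linear eigenvector equation fits the framework, and that is exactly what the Grassmannian interpretation resolves.
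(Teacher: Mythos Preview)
Your proposal is correct and follows essentially the same route as the paper: identify $g$ as the identity, set $f(A,x)=\alpha A^{\TRANS}x+\mathbf{1}_n$ (Katz--Bonacich, contraction in $\|\cdot\|_2$), $f(A,x)=\alpha A^{\TRANS}D(A)^{-1}x+\tfrac{1-\alpha}{n}\mathbf{1}_n$ (PageRank, contraction in $\|\cdot\|_1$), and $f(A,x)=\lambda_1^{-1}A^{\TRANS}x$ (eigencentrality, uniqueness up to span via Remark~\ref{eq:linear-case}). You supply more detail than the paper does---the explicit check that $D(P_\pi A P_\pi^{\TRANS})=P_\pi D(A)P_\pi^{\TRANS}$, the column-stochasticity argument for the PageRank contraction constant, the Neumann-series verification of non-negativity, and the Perron--Frobenius justification for a non-negative eigenvector---whereas the paper simply asserts that permutation equivariance ``can be easily verified'' and leaves non-negativity implicit.
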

\begin{proof}
The proof is by identifying the functions $f$ and $g$  following the definition of fixed-point centrality. 
The mapping $g$ in \eqref{eq:fpc-def} is specialized to the identify mapping from $\BR^n \to \BR^n$ (i.e. $\rho =x$) for \textcolor{black}{Katz-Bonacich centrality}, PageRank centrality and  eigencentrality.
 For Katz-Bonacich centrality, $f(A,x) = \alpha A^\TRANS x+ \mathbf{1}_n, ~ \alpha \in (0,~\|A\|_2^{-1}).$
  We observe that $\|\alpha A^\TRANS\|_2<1$ and hence $f(A,\cdot)$ for Katz-Bonacich centrality is a contraction \textcolor{black}{from} $\BR^n$ to $\BR^n$ under the vector 2-norm. 
 For PageRank centrality, $\alpha \in (0,1)$, 
 $f(A,x) = \alpha  A^\TRANS \text{diag}(A^\TRANS\mathbf{1}_n)^{-1} x + \frac{1-\alpha}{n}\mathbf{1}_n. $
 We note that $\|\alpha A^\TRANS \text{diag}(A^\TRANS\mathbf{1}_n)^{-1}\|_1=\alpha <1$ and hence $f(A,\cdot)$ for  PageRank centrality is a contraction under vector 1-norm from $\BR^n$ to $\BR^n$. 
The existence and uniqueness of  fixed-point features for these two cases above are immediate via Banach fixed-point theorem. 
 For the case with eigencentrality,  the largest eigenvalue $\lambda_1$ of $A$ is assumed to have multiplicity $1$, and the permutation equivariant mapping is 
 $f(A,x) = \frac1{\lambda_{1}}Ax$.
  The fixed-point feature $x$ is unique up to its linear span as  $f(A,\cdot)$ is a linear function from $\BR^n$ to $\BR^n$ (see Remark \ref{eq:linear-case}).
  The permutation equivariance properties of  functions $f(\cdot,
 \cdot)$ for these centralities  can be easily verified.  
\end{proof}

\begin{proposition}\label{prop:eigenvec}
Any eigenvector corresponding to a nonzero simple eigenvalue of $A$ is a fixed-point centrality for $\mathcal{G}(A)$.	
\end{proposition}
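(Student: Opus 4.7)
The plan is to mirror the proof of the preceding proposition, substituting the specified simple nonzero eigenvalue $\lambda$ for the Perron eigenvalue $\lambda_1$ used there. First I would fix a rule that selects $\lambda$ from the spectrum of $A$ in a permutation-invariant way (for instance, the $k$-th element under some canonical ordering of the nonzero simple eigenvalues), so that the scalar $\lambda(A)$ satisfies $\lambda(P_\pi A P_\pi^\TRANS) = \lambda(A)$ for every permutation matrix $P_\pi$. With this convention, I set $f(A,x) = \frac{1}{\lambda(A)}\,Ax$, which is linear in $x$.

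Permutation equivariance of $f$ then follows from the one-line computation
\[
f(P_\pi A P_\pi^\TRANS,\, P_\pi x) \;=\; \tfrac{1}{\lambda(A)}\, P_\pi A P_\pi^\TRANS P_\pi x \;=\; P_\pi\, f(A, x),
\]
using $P_\pi^\TRANS P_\pi = I$ together with the invariance of $\lambda(\cdot)$ under conjugation by $P_\pi$. Because $\lambda$ is simple, the fixed-point equation $x = f(A, x)$ is equivalent to $Ax = \lambda x$, whose solution set is the one-dimensional eigenspace $\text{span}(v)$; hence uniqueness holds up to linear span as in Remark \ref{eq:linear-case}, with the feature space understood as the Grassmannian $Gr(1,\BR^n)$.

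To finish the identification, I would take $g:\BR^n \to \BR^n_{\geq 0}$ to be the componentwise absolute value, which is manifestly permutation equivariant and nonnegative-valued; when the chosen eigenvector $v$ already has nonnegative entries (for example, when $\lambda$ is the Perron eigenvalue of a nonnegative $A$), $g$ can simply be the identity, exactly as in the preceding proof. In either case, $\rho = g(x)$ lies in $\BR^n_{\geq 0}$, and the pair $(f,g)$ realizes the eigenvector $v$ (interpreted through $g$ on its linear span) as a fixed-point centrality of $\mathcal{G}(A)$.

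The only delicate step is ensuring the permutation equivariance of $f$, which reduces to the permutation invariance of the eigenvalue-selection rule $\lambda(\cdot)$; this follows from the identity $\sigma(A) = \sigma(P_\pi A P_\pi^\TRANS)$, so the main ``obstacle'' amounts to choosing a canonical bookkeeping convention rather than to any substantive mathematical difficulty.
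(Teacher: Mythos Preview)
Your proposal is correct and follows exactly the route the paper intends: the paper omits the proof here, noting only that ``readers can readily verify the result,'' and your argument is the natural generalization of the preceding eigencentrality case, replacing $\lambda_1$ by a permutation-invariantly selected simple nonzero eigenvalue and invoking Remark~\ref{eq:linear-case} for uniqueness up to span. The only cosmetic point is that, once the feature space is identified with $Gr(1,\BR^n)$, your map $g$ should be made scale-invariant (e.g., $g(x)=|x|/\|x\|_1$) so that it is well-defined on one-dimensional subspaces; the paper is equally informal about this in the eigencentrality case, so this is not a gap in your argument relative to the paper.
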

Proofs are omitted as readers can readily verify the result.

We emphasize that the choice of $S$ in \eqref{eq:fpc-def} can be very general: it can be a set of vectors, matrices, functions, probability distributions,  strings, etc.  Below we give an example where $S$ is the space of continuous functions from $[0,T]$ to $\BR^q$ denoted by $C([0,T]; \BR^q)$ with $q\geq 1$.
%\end{remark}
%
\begin{proposition}
The equilibrium nodal cost of LQG Network Mean Field Games \textup{\cite[Sec. IV-B]{ShuangPeterMinyiTAC21}} with homogenous initial conditions, if the unique equilibrium exists, is a fixed-point centrality. 
\end{proposition}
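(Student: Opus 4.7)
My plan is to identify the feature space $S^n$ together with the mappings $f$ and $g$ required in Definition \ref{def:fpc-net}, and then to read off the required properties from the structure of the LQG network mean field game.

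First I would take $S$ to be the space $C([0,T]; \BR^q)$ of nodal mean state trajectories equipped with the sup-norm, so that $S^n$ inherits the natural product metric. Given a profile $x = (x_1, \dots, x_n) \in S^n$ of conjectured nodal means, each agent at node $i$ faces a standard LQG tracking problem whose drift and cost depend on the neighbouring means $\{x_j\}_{j \neq i}$ only through the row $[A]_{i\cdot}$ of the adjacency matrix. Solving this problem by the usual Riccati/offset procedure produces a best-response control and, because the initial conditions are homogeneous, a uniquely determined new nodal mean trajectory $\tilde x_i$. Setting $f(A, x) \triangleq (\tilde x_1, \dots, \tilde x_n)$, the network mean field consistency condition in \cite{ShuangPeterMinyiTAC21} is precisely $x = f(A, x)$, and the uniqueness assumption in the statement supplies the uniqueness required in \eqref{eq:fpc-def}.

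Second, I would let $[g(x)]_i$ be the equilibrium cost incurred by agent $i$ when responding optimally against the profile $x$; non-negativity $[g(x)]_i \geq 0$ is immediate from the positive semidefiniteness of the LQG cost weights. Permutation equivariance of both $f$ and $g$ then reduces to the node-label symmetry of the game: if nodes are relabelled by a permutation $\pi$, the data $(A, x)$ becomes $(P_\pi A P_\pi^\TRANS, P_\pi x)$, and the tracking problem of the agent now sitting at new label $\pi(i)$ is exactly the image under $\pi$ of the old agent $i$'s problem. Writing $f$ componentwise and chasing indices yields $P_\pi f(A,x) = f(P_\pi A P_\pi^\TRANS, P_\pi x)$; the same reindexing argument handles $g$.

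The main obstacle I anticipate is one of bookkeeping rather than mathematics: depending on which precise formulation of \cite{ShuangPeterMinyiTAC21} is adopted (mean states alone, mean states paired with mean controls, or local feedback gains), the operator $f(A, \cdot)$ might only close up on a larger space than $S^n$. The homogeneous initial condition hypothesis is exactly what collapses the relevant equilibrium data to a single profile in $S^n$, after which the Riccati/offset ODE system inherits the required permutation equivariance from the symmetric dependence of its coefficients on $A$. Once the feature space is fixed to match the correct equilibrium object, the verifications above are routine.
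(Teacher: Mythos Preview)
Your proposal is correct and follows essentially the same route as the paper: the feature space is taken to be $(C([0,T];\BR^q))^n$ with the sup norm, the fixed-point map $f$ is the mean-field consistency (best-response) operator characterized in \cite[Prop.~1]{ShuangPeterMinyiTAC21} by a forward--backward ODE pair, and $g$ applies the common nodal cost functional $J(\cdot)$ componentwise, from which permutation equivariance is immediate. The paper's proof is slightly terser---it simply cites \cite[Prop.~1]{ShuangPeterMinyiTAC21} for the fixed-point characterization rather than spelling out the Riccati/offset structure---but the ingredients and logic are the same.
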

\begin{proof}
Following \cite[Prop. 1]{ShuangPeterMinyiTAC21}, the network mean field trajectory denoted by $z= (z_1,...., z_n)^\TRANS$ with $z_i(t) \in \BR^{q}$ for $t\in [0,T]$ on a network with $n$ nodes satisfies  
\[
z  = \Phi (A, z), \quad z \in (C([0,T]; \BR^{q}) )^n,
\]
where  $C([0,T]; \BR^{q})$ denotes the space of continuous functions from $[0,T]$ to $\BR^{q}$  (endowed with the sup norm) and the permutation equivariant  mapping  $$\Phi(\cdot,\cdot): \BR^{n\times n} \times (C([0,T];\BR^{q}))^n \to (C([0,T];\BR^{q}))^n$$
 is characterized by a forward-backward coupled differential equation pair \cite[Prop. 1]{ShuangPeterMinyiTAC21}.
The equilibrium nodal cost is %given by
\[
\rho_i = J(z_i),  \quad z_i \in C([0,T]; \BR^{q}),  \quad i \in  [n]
\]
with the same $J(\cdot):C([0,T]; \BR^{q})\to \BR^+\cup 0$  for all nodes.
Hence it satisfies the definition of fixed-point centrality.
\end{proof}
We omit the details of the problem formulation of LQG Network Mean Field Games which requires significant space. Interested readers are referred to \cite[Sec. IV-B]{ShuangPeterMinyiTAC21}.  % 
\subsection{Properties of Fixed-Point Centrality}
An \emph{automorphism} of a (directed or undirected) graph $\mathcal{G}(V, E)$ is  a permutation map $\pi:V \to V$ that satisfies 
$$
 (i, j) \in E \quad \text{if and only if} \quad  (\pi(i), \pi(j))\in E, ~ \forall i, j \in V.
$$

\begin{proposition}\label{prop:perminv-cent}
Any fixed-point centrality of a graph $\mathcal{G}(V,E)$ is permutation invariant with respect to any automorphism map of $\mathcal{G}(V,E)$. 
\end{proposition}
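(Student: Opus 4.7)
The plan is to start from the fixed-point representation $x=f(A,x)$, $\rho=g(x)$ guaranteed by Definition~\ref{def:fpc-net}, and exploit the automorphism condition to conclude that $P_\pi x$ is another fixed point, then invoke uniqueness to finish.

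First, I would translate the automorphism condition on $\pi:V\to V$ into a matrix identity. Since $(i,j)\in E$ iff $(\pi(i),\pi(j))\in E$, the adjacency matrix satisfies $P_\pi A P_\pi^{\TRANS}=A$, where $P_\pi$ is the permutation matrix associated to $\pi$. This is the purely combinatorial input that will interact with the algebraic structure of $f$ and $g$.

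Next, I would apply the permutation equivariance of $f$ to the fixed-point equation $x=f(A,x)$. Multiplying both sides by $P_\pi$ and using equivariance gives
\begin{equation*}
P_\pi x \;=\; P_\pi f(A,x) \;=\; f(P_\pi A P_\pi^{\TRANS},\,P_\pi x) \;=\; f(A,\,P_\pi x),
\end{equation*}
where the last equality is the automorphism identity. Thus $P_\pi x$ is again a fixed point of $f(A,\cdot)$ in $S^n$. The uniqueness assumption built into Definition~\ref{def:fpc-net} (uniqueness of $x$ under the metric $d$ on $S^n$) then forces $P_\pi x = x$. Finally, applying $g$ and using its permutation equivariance yields
\begin{equation*}
\rho \;=\; g(x) \;=\; g(P_\pi x) \;=\; P_\pi g(x) \;=\; P_\pi \rho,
\end{equation*}
which is exactly the statement that $\rho_i = \rho_{\pi(i)}$ for every $i\in V$, i.e., permutation invariance of the centrality under $\pi$.

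The only subtle step is the uniqueness invocation in the linear case flagged in Remark~\ref{eq:linear-case} (e.g.\ eigencentrality), where the fixed point is unique only as an element of $\mathrm{Gr}(1,S^n)$. There one obtains $P_\pi x = \lambda x$ for some $\lambda\neq 0$; since $P_\pi$ is orthonormal and $x$ is taken with a fixed normalization (and nonnegative entries, so $\lambda>0$), $\lambda=1$ follows, and the argument closes. I expect this normalization/equivalence-class bookkeeping to be the only point requiring care; the rest is a direct manipulation of the definitions of permutation equivariance, automorphism, and fixed point.
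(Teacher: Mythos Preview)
Your proposal is correct and follows essentially the same argument as the paper's proof: translate the automorphism into $P_\pi A P_\pi^{\TRANS}=A$, use permutation equivariance of $f$ to show $P_\pi x$ is again a fixed point of $f(A,\cdot)$, invoke uniqueness to get $P_\pi x = x$, and conclude invariance of $\rho=g(x)$. Your added remark on the linear/Grassmannian case is extra care beyond what the paper's proof spells out, but it is consistent with the paper's Remark~\ref{eq:linear-case}.
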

\begin{proof}
Let $A$ denote the adjacency matrix of $\mathcal{G}(V, E)$. 
	Let $A_\pi\triangleq P^\TRANS_\pi A P_\pi$ and $x_\pi \triangleq P_\pi x$, where $P_\pi$ is the permutation matrix corresponding to the permutation map $\pi:[n]\to [n]$.
	By the definition of an automorphism $\pi$, $A= A_{\pi}$ (that is, the adjacency matrix does not change),  and hence the fixed-point feature $x$ given by $x = f(A, x) $ satisfies that 
	\[
x_{\pi} = f(A_\pi, x_\pi) = f(A, x_\pi).
	\]
	{In the definition of fixed-point centrality, such fixed-point feature is assumed to be unique.  Then $x = x_\pi$.} That is, an automorphism does not change the fixed-point features and hence does not change the fixed-point centrality  $\rho=g(x)$. 
\end{proof}
	
	A \emph{vertex transitive graph} is a graph $\mathcal{G}$ satisfying that  for any given node pair $(i,j)$, there exists some automorphism map $\phi^{i,j} \in \Pi$ such that 
	$
	\phi^{i,j}(i) = j,
	$ where $\Pi$ denotes the set of permutation  mappings $\pi: [n]\to [n]$.
	  See \cite{weissteinVertexTransitive} for examples of vertex transitive graphs.

\begin{proposition}[Vertex Transitive Graphs]\label{eq:vertex-transitive}
	{All nodes of a vertex transitive graph share the same fixed-point centrality value, that is, any fixed-point centrality for a vertex transitive graph is permutation invariant.} 
\end{proposition}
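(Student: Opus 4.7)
The plan is to derive this proposition as a direct corollary of Proposition~\ref{prop:perminv-cent} together with the definition of vertex transitivity. The previous proposition already tells us that a fixed-point centrality $\rho = g(x)$ of $\mathcal{G}(A)$ is invariant under any automorphism $\pi$ of $\mathcal{G}(A)$, in the sense that the unique fixed-point feature satisfies $P_\pi x = x$, and hence by permutation equivariance of $g$,
\[
P_\pi \rho \;=\; P_\pi g(x) \;=\; g(P_\pi x) \;=\; g(x) \;=\; \rho.
\]
Reading off entrywise, this means $\rho_{i} = \rho_{\pi(i)}$ for every $i \in [n]$ and every automorphism $\pi$. In other words, $\rho$ is constant along every orbit of the automorphism group of $\mathcal{G}(A)$.

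The second step simply translates vertex transitivity into the statement that there is a single orbit. Given any two nodes $i, j \in [n]$, the definition of a vertex transitive graph provides an automorphism $\phi^{i,j}: [n] \to [n]$ with $\phi^{i,j}(i) = j$. Applying the invariance established in the first step to $\pi = \phi^{i,j}$ yields
\[
\rho_i \;=\; \rho_{\phi^{i,j}(i)} \;=\; \rho_j.
\]
Since $i, j$ were arbitrary, all entries of $\rho$ are equal, and moreover the equality $P_\pi \rho = \rho$ holds for every permutation in the (transitive) automorphism group, establishing the claimed permutation invariance.

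There is no real obstacle here: the substantive work was already carried out in Proposition~\ref{prop:perminv-cent}, which reduced the problem to a statement about orbits via the uniqueness of the fixed-point feature. The only small subtlety is bookkeeping about the action of $P_\pi$ on vectors versus on indices, and in particular ensuring that $\rho_i = \rho_{\pi(i)}$ is the correct entrywise reading of $P_\pi \rho = \rho$; this is immediate from the definition of the permutation matrix associated with $\pi$ and requires no further computation.
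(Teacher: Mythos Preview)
Your proof is correct and follows essentially the same approach as the paper: invoke Proposition~\ref{prop:perminv-cent} to obtain invariance under every automorphism, then use the defining property of vertex transitivity to move between any two nodes. The only cosmetic difference is that the paper carries out the transitivity argument at the level of the fixed-point feature $x$ (showing $x_i = x_{\phi^{i,j}(i)} = x_j$) and invokes the permutation equivariance of $g$ at the end, whereas you first pass to $\rho$ via $g$ and then argue $\rho_i = \rho_{\phi^{i,j}(i)} = \rho_j$; both orderings are equivalent.
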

\begin{proof}
Following Prop. \ref{prop:perminv-cent} and the definition of vertex transitive graphs, we obtain,  for each $i, j \in [n]$, there exists some $\phi^{i,j}\in \Pi $\footnote{There may be {one $\phi^{i,j}$ for each node pair $(i,j)$ instead of one $\phi$ for all node pairs}.}, such that the fixed-point features satisfy $x_i = x_{\phi^{i,j}(i)} = x_j$. 
This implies that $x_i = x_j$ for all $i,j \in [n]$. Finally, the permutation equivariance of  $g(\cdot)$ in \eqref{eq:fpc-def}   leads to  the desired result. % 
\end{proof}

Properties in Prop. \ref{prop:perminv-cent} and Prop. \ref{eq:vertex-transitive} are general properties shared by all existing centralities that depend only on graph structures.
%. 
These properties may not hold in general for the outputs of graph neural network models (\cite{gori2005new,scarselli2009graph}). % 
\subsection{Centrality Variations with Respect to Graph Variations}
Consider two graphs $\mathcal{G}(A)$ and $\mathcal{G}(B)$ with the same number of nodes. Let  $\rho_A$ be a fixed-point centrality  for $\mathcal{G}(A)$ and $\rho_B$ that of $\mathcal{G}(B)$  with the same function $f(\cdot,\cdot)$, that is, 
	\begin{equation} \label{eq:f-AB}
	\begin{aligned}
		& x_A = f(A, x_A),  \quad \rho_A = g(x_A), \\
		&x_B = f(B, x_B), \quad \rho_B = g(x_B),
	\end{aligned}
	\end{equation}
	where $S^n$ is specialized to $\BR^n$, and $f(\cdot, \cdot): {\BR}^{n\times n} \times \BR^n \to {\BR}^n$ and $g(\cdot): \BR^{n}\to \BR^n$. (The specialization of $S^n$ to $\BR^n$ is for the simplicity of presentation, and it can be relaxed to any normed vector space). 
In this section we study the conditions under which $\rho_A$ and $\rho_B$ are close and establish upper bounds of their differences. 

 Let $\mathcal{U}_f\subset \BR^n$ denote the set of feasible fixed-point features with $f(\cdot, \cdot)$ in \eqref{eq:f-AB}. Consider the following assumption.
%\begin{enumerate}
\vspace{0.15cm}
\\
\textbf{Assumption (A1)}: %The function $f(\cdot, \cdot): \BR^{n\times n} \times \BR^n \to \BR^n$ in \eqref{eq:f-AB} satisfies the following: \\
(a)	There exists $L_1>0$ such that for all $x \in \mathcal{U}_f$,
	\begin{equation}
		\|f(A, x) - f(B,x)\|\leq L_1\|A-B\|_{\textup{op}},
	\end{equation}
	where the operator norm $
\|A\|_{\textup{op}} \triangleq\sup_{v\in \BR^n, v\neq 0}\frac{\|Av\|}{\|v\|}
$;\\
 (b)  For any matrix $A$ and for any $x \in \mathcal{U}_f$, there exists $L_0(A, x)\geq 0$ such that
	\begin{equation}\label{eq:relaxFPcen}
	\|f(A,x_A) -f(A, x)\|\leq L_0(A,x) \|x_A - x\|
\end{equation}
where $x_A = f(A, x_A)$;\\
(c) For the given matrix $A$,
\[
L_0(A) \triangleq\sup_{x \in \mathcal{U}_f} L_0(A, x)  <1;
\]
(d) There exists $L_g>0$ such that for all $x, v \in \mathcal{U}_f$, 
\[
 \|g(x) - g(v)\|\leq L_g \|x - v\|. 
\] 

 We call (A1)-(c) the \emph{Contraction Condition for Fixed-Point Centrality} for $\mathcal{G}(A)$; if, furthermore, $\mathcal{U}_f$ is complete under the chosen norm $\|\cdot\|$, it then gives the existence of a unique fixed-point feature for $f(A,\cdot)$ following Banach fixed-point theorem, and one can simply apply fixed-point iterations to identify such fixed-point feature  with the given graph $\mathcal{G}(A)$.% with adjacency matrix $A$.  

\begin{remark}[Different Choices of Norms]
We note that $\|\cdot\|$  can take any vector $\|\cdot\|_p$ norm, $1 \leq p\leq \infty$, as long as
the operator norm $\|\cdot \|_\textup{op}$ in (A1)-(a) is compatible with the chosen vector norm (that is,  $\|\cdot \|_\textup{op} = \sup_{v\in \BR^n, v\neq 0}\frac{\|Av\|_p}{\|v\|_p}$). 
\end{remark}

For Katz-Bonacich centrality, we choose $2$-norm and 
$L_0(A) = \alpha \|A\|_{\textup{2}}<1$ if $\alpha <\|A\|_{\textup{2}}^{-1}$. For PageRank centrality, we choose  $1$-norm  and 
$L_0(A) = \alpha  \| A^\TRANS \text{diag}(A^\TRANS\mathbf{1}_n)^{-1}\|_{\textup{1}}<1$ if $\alpha <  \|A^\TRANS \text{diag}(A^\TRANS\mathbf{1}_n)^{-1}\|_{\textup{1}}^{-1}$.

\begin{remark}[Fixed-Point Centrality in the Linear Case]
The condition $L_0(A)<1$ in (A1) is not satisfied  under $\|\cdot\|_2$ norm for the (normalized) eigencentrality, as $L_0(A)=1$ for eigencentrality. To establish the error bound, further spectral properties of the graphs are required (see e.g. \cite{avella2018centrality} via rotation analysis of eigenvectors by perturbations \cite{davis1970rotation}).
In general, for fixed-point centralities where $f(A,\cdot)$ is a linear function, one should establish the difference of two 1-dimensional subspaces characterized by $\text{span}(x_A)$ and $\text{span}(x_B)$. Such difference can be   characterized by the angular difference between the two subspaces as follows:
\[
{d}(x_A, x_B) = \left|\cos^{-1}{\left(\frac{\left|\langle x_A, x_B \rangle\right|}{\|x_A\|_2 \|x_B\|_2} \right)}\right|
\]
which is a specialization of a Grassmann distance (see  \cite{ye2016schubert}) to 1-dimensional subspaces (i.e. Grassmannian $\textup{Gr}(1,\BR^n)$). 
For characterizing such differences between $x_A$ and $x_B$ when $B$ differs from $A$ by a small perturbation, one may employ the error estimation results in \cite{davis1970rotation}.
\end{remark}

\begin{theorem}\label{thm:upperbound-centrality-good}
Under Assumption (A1)  for the fixed-point centrality \eqref{eq:f-AB}, the following holds
\begin{equation}
	\|\rho_A - \rho_B\| \leq  \frac{L_1 L_g}{1-L_0(A)} \|A -B\|_{\textup{op}}.
\end{equation}	
\end{theorem}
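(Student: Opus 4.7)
The plan is a standard fixed-point perturbation argument, leveraging each part of Assumption (A1) in turn. First, I would start from the two fixed-point identities $x_A = f(A, x_A)$ and $x_B = f(B, x_B)$ from \eqref{eq:f-AB} and insert the auxiliary term $f(A, x_B)$ to split the difference:
\begin{equation*}
\|x_A - x_B\| = \|f(A,x_A) - f(B,x_B)\| \leq \|f(A,x_A) - f(A,x_B)\| + \|f(A,x_B) - f(B,x_B)\|.
\end{equation*}

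Next I would bound the two terms separately. The first is controlled by the contraction-type estimate \eqref{eq:relaxFPcen} in (A1)-(b), giving $\|f(A,x_A) - f(A,x_B)\| \leq L_0(A, x_B) \|x_A - x_B\| \leq L_0(A)\|x_A - x_B\|$ after taking the supremum as in (A1)-(c). The second is bounded by (A1)-(a) as $\|f(A,x_B) - f(B,x_B)\| \leq L_1 \|A-B\|_{\text{op}}$, since $x_B \in \mathcal{U}_f$. Combining and using $L_0(A) < 1$ to rearrange yields
\begin{equation*}
\|x_A - x_B\| \leq \frac{L_1}{1-L_0(A)}\|A-B\|_{\text{op}}.
\end{equation*}

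Finally, the Lipschitz condition (A1)-(d) on $g$ transfers this bound from the features to the centralities via $\|\rho_A - \rho_B\| = \|g(x_A) - g(x_B)\| \leq L_g \|x_A - x_B\|$, yielding the claimed inequality. There is no real obstacle here: the argument is a single application of the triangle inequality plus the three Lipschitz/contraction estimates bundled into (A1), and the condition $L_0(A) < 1$ is precisely what is needed to invert $(1 - L_0(A))$. The only mild subtlety is ensuring that both $x_A$ and $x_B$ lie in the common domain $\mathcal{U}_f$ on which (A1)-(a,b) are assumed, which follows directly from the definition of $\mathcal{U}_f$ as the set of feasible fixed-point features of $f(\cdot,\cdot)$.
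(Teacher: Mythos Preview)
Your proposal is correct and follows essentially the same approach as the paper: triangle inequality with the intermediate term $f(A,x_B)$, then apply (A1)-(a), (A1)-(b,c), rearrange using $L_0(A)<1$, and finish with (A1)-(d). There is no meaningful difference between your argument and the paper's.
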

\begin{proof}
Following  the definition of the fixed-point centrality and Assumption (A1)(a)-(A1)(c), 
	\begin{equation*}
	\begin{aligned}
			\|x_A - &x_B\| = \|f(A,x_A) - f(B,x_B)\| \\ 
	& \leq \|f(A,x_A) - f(A,x_B)\|+ \|f(A,x_B) - f(B,x_B)\|\\
	& = L_0(A)\|x_A - x_B\| + L_1\|A- B\|_{\textup{op}}.
	\end{aligned}
\end{equation*}	
Hence subtracting $L_0(A)\|x_A - x_B\|$ and then dividing by $(1-L_0(A))$ on both sides yield
\begin{equation}
	\|\rho_A - \rho_B\| \leq  \frac{L_1}{1-L_0(A)} \|A -B\|_{\textup{op}}.
\end{equation}	
Then employing the condition in Assumption (A1)-(d) yields the desired result. 
\end{proof}

\begin{remark}
If  $f(A,\rho) $ does not depend on $\rho$, then the fixed-point centrality is trivial and the centrality variation upper bounds above should be treated differently. Such examples include degree, closeness and betweenness centralities.
\end{remark}
Centralities can be associated with probability distributions:  PageRank centrality is the steady state distribution of random walks on the graph of hyperlinks \cite{brin1998anatomy}, and degree centrality is used as the probability distribution for forming new connections \cite{barabasi1999emergence}. % , and the use of centrality to characterize sampled influence from networks as in Section \ref{sec:sample-influence}.  
To (uniquely) associate the fixed-point centrality with a probability (mass function), we consider the following assumption. \vspace{0.15cm}
\\
\textbf{Assumption (A2)}: The fixed-point centralities are normalized  with nonnegative entries, that is, 
	\[
	\sum_{i\in [n]}  \rho_i=1, \quad \rho_i \geq 0,\quad  \forall i \in [n].
	\] 
%\end{enumerate}

Clearly, this implies $\|\rho\|_1\triangleq\sum_{i=1}^n|\rho_i|=1$.
\begin{remark}[Normalization of Centralities]
If a centrality $c$ does not satisfy the condition (A2) above, it can be normalized via 
$
{\rho}_i = \frac{c_i}{\sum_{i=1}^n c_i} , ~ i\in  [n].
$
 This normalization is  useful  to associate any centrality with a probability distribution. 
For instance, the degree centrality with normalization
$
\rho_i  = \frac{d_i}{\sum_{i=1}^n d_i}$ where $d_i =\sum_{j=1}^na_{ij}$, $i \in [n],
$ is used in scale-free network models \cite{barabasi1999emergence} to represent the probability of forming new connections.
In general, one can introduce a monotone function $\phi(\cdot): \BR \to \BR_{\geq 0}$, such that 
\[
{\rho}_i = \frac{\phi(c_i)}{\sum_{i=1}^n \phi(c_i)} , \quad i\in  [n].
\]
When $\phi(\cdot) = \exp(\cdot)$ (or $\phi(\cdot) = \exp(-\cdot)$), it is then specialized to the softmax function (or the Boltzmann distribution that maximizes an associated entropy). % 
Such normalizations can be incorporated into the permutation equivariant mapping $g(\cdot)$ in the  definition of fixed-point centrality in \eqref{eq:fpc-def}.
\end{remark}

For a metric space $(\mathcal{X},d)$ and  
$p\geq 1$, let 
${\displaystyle P_{p}(\mathcal{X})}$ denote the set of all probability measures on $\mathcal{X}$ with finite 
$p${th} moment. 
The $p$-Wasserstein distance between two probability measures in ${\displaystyle P_{p}(\mathcal{X})}$ is defined as follows:
\begin{equation*}
		W_p(\rho_A, \rho_B)  = \bigg(\inf_{\gamma \in  \Gamma(\rho_A,\rho_B)}\int_{\mathcal{X}\times \mathcal{X}} d(x,y)^p d\gamma (x,y)\bigg)^{\frac1p}
\end{equation*}
where $\Gamma(\rho_A, \rho_B)$ denotes the set of probability measures on $\mathcal{X}\times \mathcal{X}$ with marginals $\rho_A$ and $\rho_B$.% defined on  $\mathcal{X} \times \mathcal{X}$  and $\mathcal{X}$ a compact set. 
\begin{proposition}\label{prop:w2-op}
Under Assumptions \textup{(A1)} and \textup{(A2)}, the following holds for the fixed-point centrality in \eqref{eq:f-AB}:
\begin{equation}
	W_p(\rho_A,\rho_B) \leq  \frac{L_1 L_g}{1-L_0(A)} \inf_{\pi \in \Pi}\|A^\pi -B\|_{\textup{op,p}},
\end{equation}	
where the matrix operator norm is $\|A\|_{\textup{op, p}} \triangleq \|A\|_p $. %  (that is, the maximum sigular value of $A$)
\end{proposition}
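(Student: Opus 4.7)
The plan is to combine the permutation equivariance of the fixed-point construction with Theorem \ref{thm:upperbound-centrality-good}, replacing $A$ by a relabeled graph $A^\pi := P_\pi A P_\pi^\TRANS$ and then optimizing over $\pi$. The key observation is that relabeling nodes of $\mathcal{G}(A)$ permutes the fixed-point feature and centrality consistently, while leaving the Lipschitz constants in Assumption (A1) unchanged.

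First, I would establish the relabeling identities. Starting from $x_A = f(A, x_A)$ and left-multiplying by $P_\pi$, the permutation equivariance of $f$ gives $P_\pi x_A = f(A^\pi, P_\pi x_A)$, so $P_\pi x_A$ is the fixed-point feature of $A^\pi$. The permutation equivariance of $g$ then yields $\rho_{A^\pi} = g(P_\pi x_A) = P_\pi g(x_A) = P_\pi \rho_A$. Substituting $A^\pi = P_\pi A P_\pi^\TRANS$ into (A1)(b) and using that $\|P_\pi v\| = \|v\|$ for standard $p$-norms shows the contraction constant is permutation-invariant: $L_0(A^\pi) = L_0(A)$. Analogous substitutions show that (A1)(a), (A1)(d) continue to hold for the pair $(A^\pi, B)$ with the same constants $L_1$, $L_g$.

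Next, Theorem \ref{thm:upperbound-centrality-good} applied to $(A^\pi, B)$ gives, for every $\pi \in \Pi$,
\begin{equation*}
\|P_\pi \rho_A - \rho_B\|_p = \|\rho_{A^\pi} - \rho_B\|_p \leq \frac{L_1 L_g}{1 - L_0(A)}\, \|A^\pi - B\|_{\textup{op}, p}.
\end{equation*}
By Assumption (A2), both $\rho_A$ and $\rho_B$ are probability measures on the node set $[n]$. For each permutation $\pi$, pairing the mass carried by node $i$ under $P_\pi \rho_A$ with the mass at node $i$ under $\rho_B$ produces a feasible coupling in $\Gamma(\rho_A,\rho_B)$, whose $p$-Wasserstein cost is dominated by $\|P_\pi \rho_A - \rho_B\|_p$. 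Hence $W_p(\rho_A, \rho_B) \leq \|P_\pi \rho_A - \rho_B\|_p$ for every $\pi$, and taking the infimum over $\pi \in \Pi$ on both sides yields the claim.

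The main obstacle is Step 3: constructing, for each $\pi$, an explicit transport plan with marginals $\rho_A$ and $\rho_B$ whose Wasserstein cost is majorized by the $\ell_p$ distance between the two relabeled probability vectors. This requires a careful choice of metric on $[n]$ compatible with the chosen $p$-norm in Assumption (A1); once this compatibility is fixed, the coupling step is routine and the remainder of the argument is an immediate consequence of Step 1 and Theorem \ref{thm:upperbound-centrality-good}.
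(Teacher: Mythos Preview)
Your proposal is correct and takes essentially the same approach as the paper: apply Theorem~\ref{thm:upperbound-centrality-good} to the relabeled graph $A^\pi$, bound $W_p(\rho_A,\rho_B)$ above by $\|P_\pi\rho_A-\rho_B\|_p$ via a transport-plan argument, and optimize over $\pi$. The paper's own proof is in fact terser than yours---it skips your Step~1 entirely and dispatches your Step~3 with the single remark that $\pi^*$ ``is just a particular transport map''---so the obstacle you flag is precisely where the paper is least explicit.
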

\begin{proof}
Recall from Theorem \ref{thm:upperbound-centrality-good} that
\begin{equation}
	\|\rho_A^{\pi^*}-\rho_B\|_p \leq  \frac{L_1 L_g}{1-L_0(A)} \|A^{\pi^*} -B\|_{\textup{op,p}},
\end{equation}
where $\pi^* =\arg\min_{\pi \in \Pi}\|A^\pi- B\|_{\text{op, p}}$. 	
Furthermore, one can verify that 
\[
W_p(\rho_A, \rho_B) \leq \|\rho_A^{\pi^*}-\rho_B\|_p,
\]
since $\pi^*$ is just a particular transport map. 
We obtain the desired result by combining the two inequalities above. 
\end{proof}

{When $p=2$, the operator norm  $\|A\|_{\textup{op},2}$ is the maximum singular value of $A$.}
Consider the matrix cut norm \cite{frieze1999quick} 
\begin{equation}\label{eq:cut-norm-matrix-no-scl}
	\|A\|_\Box \triangleq \max_{S\times T \subset [n]\times [n]}\Big| \sum_{i\in S, j\in T} a_{ij} \Big|, \quad A \in \BR^{n\times n}
\end{equation}
(without the scaling factor $\frac1{n^2}$ used in \cite[p.127]{lovasz2012large}). 
\begin{lemma}\label{lem:operator-cut-mat}
	The following  inequality holds for any symmetric matrix $A=[a_{ij}]$ with elements $|a_{ij}|\leq 1$:
\[
\|A\|_{\textup{op},2} \leq \sqrt{8 \|A\|_\Box}, \quad A \in \BR^{n\times n}.
\]
\end{lemma}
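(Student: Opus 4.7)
The plan is to use the Rayleigh characterization of $\|A\|_{\textup{op},2}$ for symmetric $A$ together with a layer-cake (level-set) decomposition of the extremal vector, mirroring the argument Lovász uses for graphons (Lemma~8.11 of \emph{Large Networks and Graph Limits}).

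First I would start from $\|A\|_{\textup{op},2}=\max_{\|x\|_2=1}|x^\TRANS A x|$, which is available because $A$ is symmetric. Picking a unit maximizer $x$ and splitting $x = x^+ - x^-$ into nonnegative and nonpositive parts with disjoint supports (so $\|x^+\|_2^2+\|x^-\|_2^2=1$), we expand
\[
x^\TRANS A x = (x^+)^\TRANS A x^+ + (x^-)^\TRANS A x^- - 2\,(x^+)^\TRANS A x^-,
\]
and reduce the problem to a single bilinear estimate on nonnegative vectors.

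The heart of the proof is the claim
\[
|u^\TRANS A v|\ \le\ \sqrt{2\|A\|_\Box}\,\|u\|_2\|v\|_2,\qquad u,v\ge 0.
\]
To establish this, I would use the layer-cake identities $u_i=\int_0^{\infty}\mathbf{1}\{u_i>s\}\,ds$, $v_j=\int_0^{\infty}\mathbf{1}\{v_j>t\}\,dt$ to write
\[
u^\TRANS A v = \int_0^{\infty}\!\!\int_0^{\infty} e(S_s,T_t)\,ds\,dt,\qquad S_s=\{i:u_i>s\},\ T_t=\{j:v_j>t\},
\]
where $e(S,T)=\sum_{i\in S,j\in T}a_{ij}$. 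For each $(s,t)$ I would combine the cut-norm bound $|e(S_s,T_t)|\le \|A\|_\Box$ with the trivial bound $|e(S_s,T_t)|\le |S_s|\cdot|T_t|$ (which uses $|a_{ij}|\le 1$) and the Chebyshev/Markov inequalities $|S_s|\le \|u\|_2^2/s^2$, $|T_t|\le \|v\|_2^2/t^2$. Splitting the $(s,t)$-plane at the threshold where $\|u\|_2^2\|v\|_2^2/(st)^2$ meets $\|A\|_\Box$ and integrating the two regimes then yields the displayed bilinear inequality.

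Once the bilinear estimate is in hand, assembling the pieces is routine. Let $a=\|x^+\|_2$, $b=\|x^-\|_2$, so $a^2+b^2=1$. Applying the estimate to each of the three terms gives
\[
\|A\|_{\textup{op},2}\ \le\ \sqrt{2\|A\|_\Box}\,\bigl(a^2+b^2+2ab\bigr) \ =\ \sqrt{2\|A\|_\Box}\,(a+b)^2 \ \le\ 2\sqrt{2\|A\|_\Box}\ =\ \sqrt{8\|A\|_\Box},
\]
where the last step uses $(a+b)^2\le 2(a^2+b^2)=2$.

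The main obstacle is the bilinear step. Done naively, the two-regime integration of $\min\bigl(\|A\|_\Box,\,\|u\|_2^2\|v\|_2^2/(st)^2\bigr)$ produces a stray logarithmic factor in $\|A\|_\Box$; avoiding it requires careful placement of the threshold and exploiting the trivial bound $|S_s|\le n$ in the small-$s$ regime. This is exactly the delicate step at which the graphon proof (Lovász, Lemma~8.11) succeeds cleanly, and the same accounting carries over to the matrix setting with the stated constant.
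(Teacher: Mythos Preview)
The paper does not prove this inequality from scratch: it embeds $A$ as a step-function graphon, observes that $\|A\|_{\textup{op},2}/n$ and $\|A\|_\Box/n^2$ coincide with the graphon operator and cut norms of that step function, and then simply cites the graphon inequality $\|\FA\|_{\textup{op}}\le\sqrt{8\|\FA\|_\Box}$ from Janson and Avella--Medina et al. Your proposal, by contrast, tries to reprove that cited inequality directly, so the routes are genuinely different.

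There is, however, a real gap in your argument. The layer-cake integration you describe, bounding $\int_0^1\!\int_0^1 \min\bigl(\|A\|_\Box,\,|S_s|\,|T_t|\bigr)\,ds\,dt$ with the Chebyshev estimates $|S_s|\le 1/s^2$, $|T_t|\le 1/t^2$, really does produce an extra $\log\|A\|_\Box$ factor; ``careful placement of the threshold'' and the crude bound $|S_s|\le n$ do not remove it (the latter just trades the logarithm for an $n$-dependent constant that does not match the claim). More to the point, the standard proof you appeal to---Janson's Lemma~E.6, and the corresponding passage in Lov\'asz---is \emph{not} a layer-cake argument. It is an eigenvector bootstrap: for the top eigenpair $Av=\lambda v$ one writes
\[
\lambda^{2}\|v\|_{1}=\|A(Av)\|_{1}\ \le\ \|A\|_{\infty\to 1}\,\|Av\|_{\infty}\ \le\ \|A\|_{\infty\to 1}\,\|v\|_{1},
\]
the last step using $|a_{ij}|\le 1$, whence $\lambda^{2}\le\|A\|_{\infty\to 1}\le 4\|A\|_\Box$ and in fact $\|A\|_{\textup{op},2}\le 2\sqrt{\|A\|_\Box}$, sharper than $\sqrt{8}$. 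So the mechanism you invoke to ``succeed cleanly'' is a different proof altogether, and your bilinear layer-cake step, as written, does not close.
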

\begin{proof}
	For any symmetric matrix $A$,	the norms $\|A\|_{\textup{op, 2}}$ and $\|A\|_\Box$ scaled respectively by $\frac1n$ and $\frac{1}{n^2}$ correspond to those graphon norms $\|\FA\|_{\textup{op}}$ and $\|\FA\|_\Box$ in \cite{lovasz2012large} for the stepfunction graphon $\FA$ with uniform partitions associated with $A$, where, with an abuse of the notation, $\|\FA\|_\Box \triangleq \sup_{S, T\subset[0,1]}\left|\int_{S\times T} \FA(x,y)dxdy\right|$ and $\|\FA\|_{\textup{op}}\triangleq \sup_{ \Fv \in L^2([0,1])}{\frac{\|\FA \Fv\|_2}{\|\Fv\|_2}}$.
	Since  $\|\mathbf{\FA}\|_\textup{op} \leq \sqrt{8\|\mathbf{\FA}\|_\Box}$ holds for any graphon in  $\mathbf{\FA}\in \ESO$ (see \cite[Lem. E.6 and Eqn. (4.4)]{janson2010graphons} or \cite[Lem.~7]{avella2018centrality}), we obtained the desired result.
\end{proof}
%
% Define the cut distance 
%$\delta_\Box(A,B) \triangleq \inf_{\pi \in \Pi}\|A^\pi -B\|_{\Box}$, 
 Replacing the operator norm by the cut norm in Prop.~\ref{prop:w2-op} via the inequality in Lemma \ref{lem:operator-cut-mat} yields the following result.
\begin{proposition} \label{prop:cut-matrix-bound}
Consider two symmetric matrices $A$ and $B$. Assume \textup{(A1)} and \textup{(A2)} for the fixed-point centrality \eqref{eq:f-AB} hold.  If $|a_{ij}|\leq 1$ and $|b_{ij}|\leq 1$ for all $i,j \in [n]$,  then %the following holds
\begin{equation}
	W_2(\rho_A,\rho_B) \leq  \frac{L_1 L_g}{1-L_0(A)}\sqrt{8 \delta_\Box(A,B)}
\end{equation}	
where $\delta_\Box(A,B) \triangleq \inf_{\pi \in \Pi}\|A^\pi -B\|_{\Box}$, $\|A\|_\Box \triangleq \max_{S\times T \subset [n]\times [n]}\Big| \sum_{i\in S, j\in T} a_{ij} \Big|$ and $\Pi$ denotes the set of all permutations from $[n]$ to $[n]$.
\end{proposition}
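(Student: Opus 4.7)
The plan is to combine Proposition~\ref{prop:w2-op} specialized to $p=2$ with Lemma~\ref{lem:operator-cut-mat} applied pointwise (over $\pi$) to the permuted difference matrix $A^\pi - B$. Setting $p=2$ in Proposition~\ref{prop:w2-op} immediately yields
\begin{equation*}
W_2(\rho_A, \rho_B) \leq \frac{L_1 L_g}{1-L_0(A)} \inf_{\pi\in\Pi} \|A^\pi - B\|_{\textup{op},2},
\end{equation*}
so the task reduces to bounding $\|A^\pi - B\|_{\textup{op},2}$ by the cut norm for every $\pi\in\Pi$.

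Next I would verify that Lemma~\ref{lem:operator-cut-mat} is applicable to each $A^\pi-B$. Since $A$ and $B$ are symmetric by hypothesis and $A^\pi = P_\pi^\TRANS A P_\pi$ is again symmetric, the difference $A^\pi - B$ is symmetric. The hypotheses $|a_{ij}|\leq 1$ and $|b_{ij}|\leq 1$ ensure that the entries of $A^\pi-B$ are uniformly bounded, which, after an appropriate scaling, puts $A^\pi-B$ in the hypothesis class of Lemma~\ref{lem:operator-cut-mat}. Applying the lemma then gives
\begin{equation*}
\|A^\pi - B\|_{\textup{op},2} \leq \sqrt{8\, \|A^\pi - B\|_\Box}
\end{equation*}
for every $\pi \in \Pi$.

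Finally, since $\sqrt{\cdot}$ is monotone increasing on $\BR_{\geq 0}$, the infimum over $\pi$ commutes with the square root, giving
\begin{equation*}
\inf_{\pi\in\Pi}\|A^\pi - B\|_{\textup{op},2} \leq \sqrt{8 \inf_{\pi\in\Pi}\|A^\pi - B\|_\Box} = \sqrt{8\,\delta_\Box(A,B)},
\end{equation*}
and substituting this into the inequality from Proposition~\ref{prop:w2-op} produces the claimed bound. The only conceptually nontrivial step is the passage from operator norm to cut norm via Lemma~\ref{lem:operator-cut-mat}; the remaining manipulations are routine. The main point requiring care is tracking the constants so that $\sqrt{8}$ emerges cleanly in the final bound, which is why I would write out the symmetry check and the entrywise bound on $A^\pi - B$ explicitly before invoking the lemma.
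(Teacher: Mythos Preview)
Your approach is exactly the paper's: invoke Proposition~\ref{prop:w2-op} with $p=2$ and then replace the operator norm by the cut norm via Lemma~\ref{lem:operator-cut-mat}, taking the infimum over $\pi$ at the end. The paper's own argument is the one-line sentence preceding the proposition, and your write-up is a faithful expansion of it.

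One small point worth tightening, since you yourself flag the constant: the entries of $A^\pi-B$ satisfy $|[A^\pi-B]_{ij}|\leq 2$, not $\leq 1$, so Lemma~\ref{lem:operator-cut-mat} does not apply verbatim. If you rescale by $\tfrac12$, apply the lemma, and scale back, the $\sqrt{8}$ becomes $\sqrt{16}$; the paper glosses over this as well, so the discrepancy is in the statement rather than in your strategy. The structure of your argument is correct regardless of which absolute constant survives.
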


%The results above hold for symmetric matrices that are sparse or dense.   {In the non-symmetric case, a further extension of the relation between operator norm and cut norm is required, which shall be investigated in future works.} 

%The estimates in this section are rough estimates that could work for a class of eigen centralities. For particular choices of the mapping $f(\cdot,\cdot)$ and $g(\cdot,\cdot)$, the upper bounds can be certainly sharpened by taking into further properties of these mappings.
%
%
\section{Fixed-point Centrality for Infinite Networks}
Graphons are useful in characterizing and comparing graphs of different size and defining limits  of (deterministic or random) graph sequences. This section extends the fixed-point centralities to those for graphons. 
\subsection{Fixed-Point Centrality for Graphons}
{Let $\mathcal{W}_c$ denote the set of symmetric measurable functions $\FW:[0,1]^2\to [-c, c]$ with $c>0$.  Let $S^{[0,1]}$ denote the infinite Cartesian product of $S$ with the index set $[0,1]$. Let $d$ denotes the metric for $S^{[0,1]}$.} Similar to the finite graph case, a centrality for a graphon with the vertex set $[0,1]$ is defined as the mapping 
$\rho:[0,1]\to \BR_{\geq 0}$ which characterizes the ``importance" of nodes on the infinite network associated with the underlying graphon. 
%
%We define $\FA^\pi$ and $\rho^\pi$  as follows: for all $\alpha,\beta \in[0,1]$,
% $$\FA^\pi(\alpha, \beta)\triangleq \FA(\pi(\alpha), \pi(\beta)),\quad \text{and}\quad \rho^\pi(\alpha) \triangleq \rho(\pi(\alpha)).$$
\begin{definition}[Permutation Equivariant Operator]
	An operator $f(\cdot, \cdot): \mathcal{W}_c \times S^{[0,1]} \to S^{[0,1]}$ is \emph{permutation equivariant with respect to a measure preserving  {bijection}} $\pi : [0,1]  \to [0,1]$ if
\begin{equation}
	f( \FA,  \rho)^\pi 
  = f( \FA^\pi ,  \rho^\pi), ~~ \textcolor{black}{\forall \rho\in S^{[0,1]}, ~\forall \FA \in \ESC}
\end{equation}
where $\FA^\pi(\alpha, \beta)\triangleq \FA(\pi(\alpha), \pi(\beta))$ and $\rho^\pi(\alpha) \triangleq \rho(\pi(\alpha))$ for $\alpha,\beta \in[0,1]$.
 An operator $f(\cdot, \cdot): \mathcal{W}_c \times S^{[0,1]} \to S^{[0,1]}$ is \emph{permutation equivariant} if it is permutation equivariant with respect to all measure preserving  {bijections} $\pi : [0,1]  \to [0,1]$.
\end{definition}
\begin{definition}[Permutation Invariant Operator]
	An operator $f(\cdot, \cdot): \mathcal{W}_c \times S^{[0,1]} \to S^{[0,1]}$ is \emph{permutation invariant with respect to a measure preserving bijection} $\pi : [0,1]  \to [0,1]$ if
\begin{equation}
	f( \FA,  \rho) = f( \FA^\pi ,  \rho^\pi), ~ \textcolor{black}{\forall \rho\in S^{[0,1]}, ~\forall \FA \in \ESC,}
\end{equation}
where $\FA^\pi(\alpha, \beta)\triangleq \FA(\pi(\alpha), \pi(\beta))$ and $\rho^\pi(\alpha) \triangleq \rho(\pi(\alpha))$ for $\alpha,\beta \in[0,1]$. 
An operator $f(\cdot, \cdot): \mathcal{W}_c \times S^{[0,1]} \to S^{[0,1]}$ is \emph{permutation invariant} if it is {permutation invariant} with respect to {all measure preserving}  {bijections} $\pi : [0,1]  \to [0,1]$.
\end{definition}
Similarly, a mapping $g(\cdot): S^{[0,1]} \to S^{[0,1]}$ is permutation equivariant (resp. permutation invariant) if for all measure preserving  {bijections} $\pi : [0,1]  \to [0,1]$, 
$
g(\rho)^\pi = g(\rho^\pi) ~ (\text{resp.} ~ g(\rho)= g(\rho^\pi)).
$
%\textcolor{red}{Check: The set of $\pi$ is uncountably infinite. Can we say ``for all measure preserving bijections?'' }
%
\begin{definition}[Graphon Fixed-Point  Centrality] \label{def:graphon-fpc}
	A centrality $\rho:[0,1]\to \BR_{
	\geq 0}$  is a \emph{fixed-point centrality} for a graphon $\FA\in \ESC$ associated with the feature space $(S^{[0,1]},d)$ if there exists a permutation equivariant fixed-point mapping $f(\cdot, \cdot): \mathcal{W}_c \times S^{[0,1]} \to S^{[0,1]}$,  a permutation equivariant mapping $g(\cdot):   S^{[0,1]} \to \BR_{\geq 0}$, and a unique function $\Fx\in S^{[0,1]}$ under the metric $d$,   such that
\begin{equation}\label{eq:graphon-fpc}
\begin{aligned}
		& \Fx = f(\FA, \Fx),\\
		& \rho = g( \Fx), \quad \rho_\alpha \geq 0, ~~\alpha \in [0,1].
\end{aligned}
\end{equation}
\end{definition}
%
%\begin{remark}[Uniqueness]
	We note that  the ``uniqueness'' of $\Fx$ in the definition above depends on the choice of $S^{[0,1]}$ and the underlying metric $d$, and it could mean an equivalent class of functions. For example, if we choose $S=\BR$ and let the set $\BR^{[0,1]}$ be endowed with $L^p([0,1])$ norm, then the unique $\Fx \in L^p([0,1])$ is interpreted as the equivalent class up to discrepancies on sets with Lebesgue measure zero. Another such example, 
	 similar to the finite graph case,  is that the  ``uniqueness'' of $\Fx$ when $f(\FA, \cdot)$ is a linear mapping shall be interpreted as the unique subspace spanned by $\Fx$ (see Remark \ref{eq:linear-case}).
%\end{remark}
%

\begin{proposition}
Graphon eigencentrality, graphon Katz-Bonacich centrality, and  graphon PageRank centrality  are graphon fixed-point centralities. 
\end{proposition}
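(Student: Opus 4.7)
The plan is to mirror the proof of the analogous finite-graph proposition by identifying, for each of the three graphon centralities (E4)--(E6), the permutation equivariant operators $f(\cdot,\cdot)$ and $g(\cdot)$ that realize the fixed-point representation \eqref{eq:graphon-fpc} of Definition \ref{def:graphon-fpc}, and then arguing that the required fixed point exists and is unique in the appropriate sense. In every case $g$ will be taken to be the identity on the chosen feature space, so the whole burden lies on $f$. For graphon eigencentrality I would set $f(\FA,\Fx) = \tfrac{1}{\lambda_1}\FA\Fx$ on $S^{[0,1]} = L^2([0,1])$; for graphon Katz--Bonacich I would set $f(\FA,\Fx) = \alpha \FA\Fx + \mathbf{1}$ on $L^2([0,1])$; and for graphon PageRank I would set $f(\FA,\Fx) = \alpha\,\FA \odot \FD^{-1}\Fx + (1-\alpha)\mathbf{1}$ on $L^1([0,1])$.

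Next I would verify permutation equivariance of each $f$ under an arbitrary measure-preserving bijection $\pi:[0,1]\to[0,1]$. The key computation is the change-of-variables identity
\[
(\FA^\pi \Fx^\pi)(\alpha) \;=\; \int_{[0,1]} \FA(\pi(\alpha),\pi(\beta))\,\Fx(\pi(\beta))\,d\beta \;=\; (\FA\Fx)(\pi(\alpha)) \;=\; (\FA\Fx)^\pi(\alpha),
\]
which uses that $\pi$ is measure preserving. Since $\mathbf{1}^\pi = \mathbf{1}$ and scalar multiplication commutes with $\pi$, the eigencentrality and Katz--Bonacich operators are immediately permutation equivariant. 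For the PageRank operator one additionally observes that $\FD^\pi(\alpha) = \int \FA(\pi(\beta),\pi(\alpha))\,d\beta = \FD(\pi(\alpha))$, so $\FD^{-1}$ transforms the same way as a function on $[0,1]$, and the Hadamard-type product $\FA \odot \FD^{-1}$ transforms correctly.

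The main obstacle is the existence and uniqueness of the fixed point $\Fx$, and this is where the three cases diverge. For Katz--Bonacich, the assumption $\alpha < 1/\lambda_1$ together with the fact that $\|\FA\|_{\textup{op}} = \lambda_1$ for the symmetric compact operator $\FA$ gives $\|\alpha\FA\|_{\textup{op}} < 1$, so $f(\FA,\cdot)$ is a contraction on the Banach space $L^2([0,1])$ and Banach's fixed-point theorem supplies a unique $\Fx$. For PageRank, a direct computation in the $L^1$ norm shows $\|\alpha\,\FA\odot\FD^{-1}\Fx\|_1 \leq \alpha\|\Fx\|_1$ (by Fubini, since the column integrals of $\FA\odot\FD^{-1}$ are at most $1$), so $f(\FA,\cdot)$ is an $\alpha$-contraction on $L^1([0,1])$ and Banach's theorem again applies; moreover the fixed point lies in the nonnegative cone by inspection of the Neumann series. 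For eigencentrality the situation is genuinely linear and I would invoke Remark \ref{eq:linear-case}: because $f(\FA,\cdot)$ is linear, uniqueness must be interpreted on the Grassmannian of one-dimensional subspaces of $L^2([0,1])$, and simplicity of $\lambda_1$ (as assumed in (E4)) guarantees that this one-dimensional eigenspace is unique.

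Finally, to complete the match with Definition \ref{def:graphon-fpc} I would verify that in each case the fixed point $\Fx$ (or a suitable representative of its equivalence class) can be taken with $\rho_\alpha = \Fx_\alpha \geq 0$ almost everywhere: for Katz--Bonacich and PageRank this is immediate from the Neumann-series representations already written in (E5) and (E6) together with $\FA \in \ESZ$; for eigencentrality it follows from the Perron-Frobenius-type statement for nonnegative symmetric compact integral operators, which guarantees a nonnegative leading eigenfunction under the simplicity assumption. Permutation equivariance of the identity map $g$ is trivial, so combining the three ingredients yields the conclusion.
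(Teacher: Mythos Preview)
Your proposal is correct and follows exactly the route the paper indicates: the paper omits the proof entirely, stating only that it ``follow[s] similar arguments as those in the finite network case,'' and your identification of $f$ and $g$, the contraction arguments for Katz--Bonacich and PageRank, and the appeal to Remark~\ref{eq:linear-case} for eigencentrality are precisely the graphon analogues of that finite-graph proof. Your explicit change-of-variables verification of permutation equivariance and the nonnegativity check go beyond what the paper spells out, but they fill in the details in the intended way.
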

\begin{proposition}
Any eigenfunction of a graphon operator from $L^2([0,1])$ to $L^2([0,1])$ corresponding to a non-zero simple eigenvalue is a graphon fixed-point centrality.
\end{proposition}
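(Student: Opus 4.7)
The plan is to verify the three ingredients in Definition~\ref{def:graphon-fpc}: exhibit a permutation equivariant fixed-point operator $f$, exhibit a permutation equivariant mapping $g$ into the non-negative cone, and confirm that the given eigenfunction is the unique (in the appropriate sense) fixed point of $f(\FA,\cdot)$. This mirrors the finite-dimensional argument used for Prop.~\ref{prop:eigenvec}.

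First, take $S^{[0,1]}=L^2([0,1])$ and, for a nonzero simple eigenvalue $\lambda$ of $\FA$ with eigenfunction $\Fx$, define $f(\FA,\Fv) \triangleq \tfrac{1}{\lambda}\,\FA \Fv$. The eigenequation $\FA \Fx=\lambda \Fx$ is exactly $\Fx=f(\FA,\Fx)$, so the fixed-point equation in \eqref{eq:graphon-fpc} is satisfied. For permutation equivariance, let $\pi:[0,1]\to[0,1]$ be any measure-preserving bijection; then
\begin{equation*}
f(\FA^\pi,\Fv^\pi)(\alpha)=\tfrac{1}{\lambda}\!\int_{[0,1]}\!\FA(\pi(\alpha),\pi(\beta))\Fv(\pi(\beta))\,d\beta=\tfrac{1}{\lambda}\!\int_{[0,1]}\!\FA(\pi(\alpha),\gamma)\Fv(\gamma)\,d\gamma=f(\FA,\Fv)^{\pi}(\alpha),
\end{equation*}
where the middle equality uses the change of variable $\gamma=\pi(\beta)$ justified by $\pi$ being measure-preserving.

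Second, define $g(\Fv)_\alpha \triangleq |\Fv_\alpha|$, i.e.\ the pointwise absolute value. This is a pointwise operation, hence $g(\Fv)^\pi(\alpha)=|\Fv(\pi(\alpha))|=g(\Fv^\pi)(\alpha)$, so $g$ is permutation equivariant, and clearly $g(\Fx)_\alpha\in \BR_{\geq 0}$ for a.e.\ $\alpha\in[0,1]$. Setting $\rho = g(\Fx)$ then gives a non-negative function on $[0,1]$ as required.

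Third, address the uniqueness clause. Since $f(\FA,\cdot)$ is linear and $\lambda$ is a simple eigenvalue, the fixed-point set $\{\Fv\in L^2([0,1]):\FA\Fv=\lambda\Fv\}$ equals the one-dimensional subspace spanned by $\Fx$. Uniqueness of $\Fx$ must therefore be interpreted in the sense of Remark~\ref{eq:linear-case}: we pass to the Grassmannian $\textup{Gr}(1,L^2([0,1]))$, in which the eigenspace $\textup{span}(\Fx)$ is a single point and the required metric $d$ is a Grassmann distance. The main (and only) obstacle is exactly this uniqueness issue, since the eigenfunction is determined only up to scalar multiple; once the Grassmannian interpretation of Remark~\ref{eq:linear-case} is adopted, the fixed-point equation, the permutation equivariance of both $f$ and $g$, and the non-negativity of $\rho$ follow by direct verification as above, which completes the proof.
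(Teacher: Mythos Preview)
Your argument is correct and matches the paper's intended approach: the paper omits the proof entirely, merely stating that it ``follow[s] similar arguments as those in the finite network case,'' and your verification---the linear fixed-point map $f(\FA,\Fv)=\tfrac{1}{\lambda}\FA\Fv$, its permutation equivariance via the change of variables $\gamma=\pi(\beta)$ under a measure-preserving bijection, and uniqueness in the Grassmannian sense per Remark~\ref{eq:linear-case}---is precisely that transfer from Prop.~\ref{prop:eigenvec}. Your choice $g(\Fv)=|\Fv|$ is a sensible addition to secure the non-negativity requirement on $\rho$ for an arbitrary (not necessarily Perron) eigenfunction, a point the paper leaves implicit.
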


\begin{proposition}
%If we let $S = C([0,T];\BR^d)$, we can consider  the equilibrium cost of LQG Network Mean Field Games studied in \cite[Sec. IV-B]{ShuangPeterMinyiTAC21} as a fixed-point centrality.
The equilibrium nodal cost of LQG Graphon Mean Field Games \textup{\cite[Sec. IV-C]{ShuangPeterMinyiTAC21}} with homogenous initial conditions, if the unique equilibrium exists, is a graphon fixed-point centrality. 
\end{proposition}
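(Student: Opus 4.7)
The plan is to mimic the finite-network argument word-for-word, but with the role of the network adjacency matrix $A \in \BR^{n \times n}$ replaced by the graphon $\FA \in \ESC$ and the role of the index set $[n]$ replaced by the continuum $[0,1]$. The three ingredients required by Definition \ref{def:graphon-fpc} are (i) a suitable feature space $S^{[0,1]}$ with metric $d$, (ii) a permutation equivariant fixed-point operator $f(\FA,\cdot)$ on $S^{[0,1]}$, and (iii) a permutation equivariant readout $g(\cdot)$ into $\BR_{\geq 0}^{[0,1]}$. I will take $S = C([0,T];\BR^q)$ endowed with the sup norm and let $S^{[0,1]}$ be the corresponding Cartesian product indexed by $\alpha \in [0,1]$, equipped with an appropriate product metric $d$ (e.g.\ an essential-supremum metric over $\alpha$) under which the assumed equilibrium is unique.

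First I would invoke the graphon mean field equilibrium characterization in \cite[Sec.~IV-C]{ShuangPeterMinyiTAC21} to obtain an operator
\[
\Phi(\cdot,\cdot):\ESC\times S^{[0,1]} \to S^{[0,1]},
\]
so that the graphon mean field trajectory $\Fz = (\Fz_\alpha)_{\alpha\in[0,1]}$, under homogeneous initial conditions, satisfies the fixed-point equation $\Fz = \Phi(\FA,\Fz)$, where $\Phi$ is defined implicitly by a forward-backward coupled graphon differential equation pair analogous to \cite[Prop.~1]{ShuangPeterMinyiTAC21}. Setting $f \triangleq \Phi$, the hypothesis that a unique equilibrium exists delivers the unique fixed-point feature $\Fx = \Fz \in S^{[0,1]}$ required by \eqref{eq:graphon-fpc}. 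The readout is then $g(\Fx)_\alpha \triangleq J(\Fx_\alpha)$, where $J:C([0,T];\BR^q) \to \BR_{\geq 0}$ is the common nodal cost functional; since $J$ does not depend on $\alpha$ and is applied pointwise, $g$ is trivially permutation equivariant and maps into $\BR_{\geq 0}^{[0,1]}$.

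The substantive step is to verify permutation equivariance of $\Phi$. Given any measure preserving bijection $\pi:[0,1]\to[0,1]$ and any $\FA \in \ESC$, the pulled-back graphon $\FA^\pi(\alpha,\beta) = \FA(\pi(\alpha),\pi(\beta))$ generates the same forward-backward system as $\FA$ after relabelling the index variable by $\pi$. Because the cost weights, dynamics, and coupling in the LQG graphon mean field problem depend on the index only through integration against $\FA$, and because $\pi$ preserves the Lebesgue measure used in all such integrals, a change of variable $\beta \mapsto \pi(\beta)$ inside the coupling integrals shows that $\Fz^\pi$ solves the forward-backward system associated to $\FA^\pi$. By the assumed uniqueness of the equilibrium for each graphon, this gives $\Phi(\FA^\pi,\Fz^\pi) = \Fz^\pi = \Phi(\FA,\Fz)^\pi$, i.e.\ $\Phi$ is permutation equivariant in the sense required.

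The main obstacle is really this last verification: one must be careful that the notion of ``unique equilibrium'' together with the chosen feature topology $(S^{[0,1]},d)$ is genuinely compatible with measure preserving rearrangement (so that $\Fz^\pi$ is again an element of $S^{[0,1]}$ of the right regularity, and uniqueness can be invoked to identify it with $\Phi(\FA^\pi,\cdot)$'s fixed point). Once this compatibility is settled, the three bullets of Definition \ref{def:graphon-fpc} are met and the claim follows; as with the finite-network counterpart, a detailed unpacking of the forward-backward equations would add length without new ideas, so I would simply cite \cite[Sec.~IV-C]{ShuangPeterMinyiTAC21} for the derivation of $\Phi$.
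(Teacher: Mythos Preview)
Your proposal is correct and follows exactly the route the paper indicates: the paper omits the proof, stating only that it ``follows similar arguments as those in the finite network case,'' and you have carried out precisely that translation. One small discrepancy worth noting is the choice of feature space: the paper specializes $S^{[0,1]}$ to $C([0,T];(L^2([0,1]))^q)$ rather than the raw Cartesian product $(C([0,T];\BR^q))^{[0,1]}$ with an essential-supremum metric, which matters for the well-posedness of the forward-backward system and for the change-of-variable argument under measure preserving bijections; otherwise your argument matches the intended one.
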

%\begin{remark}
In LQG Graphon Mean Field Games \cite[Sec. IV-C]{ShuangPeterMinyiTAC21}, the product set $S^{[0,1]}$ is specialized to
$C([0,T];(L^2([0,1]))^q)$, where $q\geq 1$ is the dimension of the local state of agents. 
Proofs of these propositions follow similar arguments as those in the finite network case, and hence are omitted.
%\end{remark}
%
\subsection{Centrality Variations with Respect to Graphon Variations}
Consider two graphons $\FA$ and $\FB$ in $\ESC$, and let  $\rho_\FA$ and  $\rho_\FB$ be respectively their fixed-point centralities  as in \eqref{eq:graphon-fpc}, that is, 
	\begin{equation} \label{eq:graphon-f-AB}
	\begin{aligned}
		& \Fx_\FA = f(\FA, \Fx_\FA),  \quad \rho_A = g(\Fx_\FA), \\
		&\Fx_\FB = f(\FB, \Fx_\FB), \quad \rho_B = g(\Fx_\FB),
	\end{aligned}
	\end{equation}
	where  the feature space $S^{[0,1]}$ is specialized to $L^p([0,1])$ with $p\geq 1$, and the operators $f(\cdot,\cdot)$ and $g(\cdot)$ are specialized to $f(\cdot, \cdot): \mathcal{W}_c \times L^p([0,1]) \to L^p([0,1])$ and $g(\cdot):   L^p([0,1]) \to L^p([0,1])$.
 Let $\mathcal{U}_f\subset L^p([0,1])$ denote the set of feasible fixed-point features associated with $f(\cdot, \cdot)$ in \eqref{eq:graphon-f-AB}. Consider the following assumption.\vspace{0.15cm}
 \\
\textbf{Assumption (A3)}:
	%\item[] %The operator $f(\cdot, \cdot): \mathcal{W}_c \times L^p[0,1] \to L^p[0,1]$ in \eqref{eq:f-AB} satisfies the following: \\
(a)	There exists $L_1>0$ such that for all $\Fx \in \mathcal{U}_f$,
	\begin{equation}
		\|f(\FA, \Fx) - f(\FB,\Fx)\|\leq L_1\|\FA-\FB\|_{\textup{op}},
	\end{equation}
	where the operator norm $
\|\FA\|_{\textup{op}} \triangleq\sup_{\Fx\neq 0, \|\Fx\|<\infty }\frac{\|\FA\Fx\|}{\|\Fx\|}
$.\\
 (b)  For any graphon $\FA \in \ESC$ and $\Fx \in \mathcal{U}_f$, there exists $L_0(\FA, \Fx)\geq 0$ such that
	\begin{equation}\label{eq:relaxFPcen}
	\|f(\FA,\Fx_\FA) -f(\FA, \Fx)\|\leq L_0(\FA,\Fx) \|\Fx_\FA - \Fx\|
\end{equation}
where $\Fx_A = f(A, \Fx_A)$.\\
(c) For the given graphon $\FA$,
\[
L_0(\FA) \triangleq\sup_{\Fx \in \mathcal{U}_f} L_0(\FA, \Fx)  <1.
\] \\
(d) There exists $L_g>0$ such that for all $\Fx, \Fv \in \mathcal{U}_f$, 
\[
 \|g(\Fx) - g(\Fv)\|\leq L_g \|\Fx - \Fv\|. 
\] 
%\end{enumerate}
\begin{theorem}\label{thm:graphon-upperbound-centrality-good}
Under Assumption \textup{(A3)}  for the graphon fixed-point centrality \eqref{eq:graphon-f-AB}, the following holds
\begin{equation}
	\|\rho_\FA - \rho_\FB\| \leq  \frac{L_1 L_g}{1-L_0(\FA)} \|\FA -\FB\|_{\textup{op}}.
\end{equation}	
\end{theorem}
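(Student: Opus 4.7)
The plan is to mirror the argument of Theorem \ref{thm:upperbound-centrality-good} almost verbatim, since Assumption (A3) is the graphon analog of (A1) and the fixed-point defining equations \eqref{eq:graphon-f-AB} have exactly the same structural form as \eqref{eq:f-AB}. The only conceptual checkpoint is that the triangle inequality, the contraction step, and the Lipschitz step all make sense in the Banach space $L^p([0,1])$ with the operator norm $\|\cdot\|_{\textup{op}}$ compatible with the chosen vector norm on $\mathcal{U}_f$; this is precisely what (A3)(a)--(d) are designed to guarantee.

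Concretely, first I would start from $\Fx_\FA = f(\FA, \Fx_\FA)$ and $\Fx_\FB = f(\FB, \Fx_\FB)$, insert the intermediate term $f(\FA, \Fx_\FB)$, and apply the triangle inequality to get
\begin{equation*}
\|\Fx_\FA - \Fx_\FB\| \leq \|f(\FA, \Fx_\FA) - f(\FA, \Fx_\FB)\| + \|f(\FA, \Fx_\FB) - f(\FB, \Fx_\FB)\|.
\end{equation*}
Next, assuming $\Fx_\FB \in \mathcal{U}_f$, I would apply (A3)(b)--(c) to bound the first term by $L_0(\FA)\|\Fx_\FA - \Fx_\FB\|$, and apply (A3)(a) to bound the second term by $L_1 \|\FA - \FB\|_{\textup{op}}$. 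Rearranging and dividing by $1 - L_0(\FA) > 0$ (justified by (A3)(c)) yields
\begin{equation*}
\|\Fx_\FA - \Fx_\FB\| \leq \frac{L_1}{1 - L_0(\FA)} \|\FA - \FB\|_{\textup{op}}.
\end{equation*}
Finally, invoking the Lipschitz property of $g$ from (A3)(d) gives $\|\rho_\FA - \rho_\FB\| = \|g(\Fx_\FA) - g(\Fx_\FB)\| \leq L_g \|\Fx_\FA - \Fx_\FB\|$, and chaining this with the previous inequality produces the advertised bound.

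The only mildly delicate point, and the closest thing to an obstacle, is ensuring that $\Fx_\FB$ lies in the admissible set $\mathcal{U}_f$ on which (A3)(b)--(d) are posited; by Definition \ref{def:graphon-fpc} any fixed-point feature is an element of $\mathcal{U}_f$ by construction, so this is immediate. There is no need to invoke any graphon-specific machinery (spectral perturbation, cut-norm comparisons, or measure preserving rearrangements), because the theorem is stated with the operator norm of $\FA - \FB$ directly; those sharpening tools would only enter in downstream results analogous to Proposition \ref{prop:cut-matrix-bound}.
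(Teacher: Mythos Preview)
Your proposal is correct and follows essentially the same approach as the paper, which explicitly states that the proof ``follows the same lines of arguments as those for Theorem~\ref{thm:upperbound-centrality-good}.'' The triangle-inequality decomposition with the intermediate term $f(\FA,\Fx_\FB)$, followed by the contraction and Lipschitz steps from (A3)(b)--(c), (A3)(a), and (A3)(d), is exactly the intended argument.
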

The proof essentially follows the same lines of arguments as those for Theorem \ref{thm:upperbound-centrality-good}. %  and hence we omit it.
\vspace{0.15cm}\\
\textbf{Assumption (A4)}: The graphon fixed-point centrality $\rho$  satisfies
\begin{equation}
	\int_{[0,1]} \rho_\alpha d\alpha  = 1 \quad \text{and} \quad \rho_\alpha \geq 0,
\end{equation}
that is, $\|\rho\|_1 \triangleq \int_{[0,1]} |\rho_\alpha| d \alpha =1$. % and hence $\rho \in L^1[0,1]$. 
\begin{proposition}\label{prop:centrality-graphon-norm}
Under Assumptions \textup{(A3)} and \textup{(A4)}, the following holds for the fixed-point centrality in \eqref{eq:graphon-f-AB}:
\begin{equation}
	W_p(\rho_\FA,\rho_\FB) \leq  \frac{L_1 L_g}{1-L_0(\FA)} \inf_{\phi \in \Phi}\|\FA^\phi -\FB\|_{\textup{op,p}},
\end{equation}	
where $\Phi$ denotes the set of all measure preserving bijections from $[0,1]$ to $[0,1]$ and
 the operator norm is $\|\FA\|_{\textup{op, p}} \triangleq \sup_{\Fx\neq 0, \Fx \in L^p([0,1])}\frac{\|\FA\Fx\|_p}{\|\Fx\|_p} $. %  (that is, the maximum sigular value of $A$)
\end{proposition}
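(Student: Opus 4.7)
The plan is to mirror the strategy of Proposition \ref{prop:w2-op}, adapted to the graphon setting. The argument decomposes into three pieces: exploit the permutation equivariance of the fixed-point construction to relate $\rho_{\FA^\phi}$ to $\rho_\FA$; apply the graphon variation bound from Theorem \ref{thm:graphon-upperbound-centrality-good} to the pair $(\FA^\phi, \FB)$; then bound the Wasserstein distance by the resulting $L^p$ error through a coupling induced by $\phi$.

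First I would establish the equivariance identity $\rho_{\FA^\phi} = \rho_\FA^\phi$ for every measure-preserving bijection $\phi \in \Phi$. Starting from $\Fx_\FA = f(\FA, \Fx_\FA)$, permutation equivariance of $f$ gives $\Fx_\FA^\phi = f(\FA, \Fx_\FA)^\phi = f(\FA^\phi, \Fx_\FA^\phi)$, so $\Fx_\FA^\phi$ is a fixed-point feature for $\FA^\phi$; uniqueness then forces $\Fx_{\FA^\phi} = \Fx_\FA^\phi$, and permutation equivariance of $g$ yields the claim. The same change-of-variables argument, combined with invariance of the $L^p$ norm under measure-preserving rearrangements, gives $L_0(\FA^\phi) = L_0(\FA)$, so the constant in Theorem \ref{thm:graphon-upperbound-centrality-good} is unchanged when $\FA$ is relabeled by $\phi$.

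Next, for each fixed $\phi \in \Phi$, apply Theorem \ref{thm:graphon-upperbound-centrality-good} to the pair $(\FA^\phi, \FB)$ to obtain
\begin{equation*}
\|\rho_\FA^\phi - \rho_\FB\|_p \;=\; \|\rho_{\FA^\phi} - \rho_\FB\|_p \;\leq\; \frac{L_1 L_g}{1-L_0(\FA)}\, \|\FA^\phi - \FB\|_{\textup{op},p}.
\end{equation*}
The last step is to dominate $W_p(\rho_\FA, \rho_\FB)$ by the left-hand side. Since $\phi$ is a measure-preserving bijection and both $\rho_\FA$ and $\rho_\FB$ are probability densities by Assumption (A4), the map $\alpha \mapsto (\phi(\alpha), \alpha)$ pushes Lebesgue measure to a particular coupling of the two distributions, and the cost of this coupling is exactly controlled by $\|\rho_\FA^\phi - \rho_\FB\|_p$. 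This plays the role of the remark ``$\pi^*$ is just a particular transport map'' in the proof of Proposition \ref{prop:w2-op}. Taking the infimum over $\phi \in \Phi$ on the right-hand side then completes the argument.

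The main obstacle I expect is this last step, because the finite-dimensional analogue is asserted rather than proved, and in the continuous setting one must write down an explicit coupling and verify its cost bound. The natural candidate is the pushforward of Lebesgue measure on $[0,1]$ under $\alpha \mapsto (\phi(\alpha), \alpha)$, reweighted appropriately by the densities so that its marginals are $\rho_\FA$ and $\rho_\FB$; measure-preservation of $\phi$ together with Assumption (A4) then reduces the transport cost to an integral that is bounded by $\|\rho_\FA^\phi - \rho_\FB\|_p$ via a standard rearrangement inequality. Beyond this technicality the remaining arguments are direct translations of the finite-case proof.
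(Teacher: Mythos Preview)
Your proposal is correct and takes essentially the same approach as the paper, which simply states that the proof follows the arguments of Proposition~\ref{prop:w2-op} in the graphon setting. Your version is in fact more careful than the paper's sketch: you make explicit the equivariance identity $\rho_{\FA^\phi}=\rho_\FA^\phi$ and the invariance $L_0(\FA^\phi)=L_0(\FA)$, you argue for each $\phi$ and take the infimum at the end (which sidesteps the possible non-attainment of $\inf_{\phi}$ in the continuous case), and you correctly flag the Wasserstein-to-$L^p$ step as the one technical point left informal in both treatments.
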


\begin{proposition}\label{prop:centrality-graphon-metric}
Consider two graphons $\FA$ and $\FB$ in $\ESO$. Assume \textup{(A3)} and \textup{(A4)} for the graphon fixed-point centrality \eqref{eq:graphon-f-AB} hold.  Then the following holds
\begin{equation}
	W_2(\rho_\FA,\rho_\FB) \leq  \frac{L_1 L_g}{1-L_0(\FA)}\sqrt{8 \delta_\Box(\FA,\FB)}.
\end{equation}	
where $\delta_\Box(\FA,\FB)\triangleq \inf_{\phi \in \Phi}\|\FA^\phi-\FB\|_\Box$ and $\|\FA\|_\Box \triangleq \sup_{S, T\subset[0,1]}\left|\int_{S\times T} \FA(x,y)dxdy\right|$, and $\Phi$ denotes the set of all measure preserving bijections $\phi:[0,1]\to[0,1]$. 
\end{proposition}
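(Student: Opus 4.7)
The plan is to combine Proposition \ref{prop:centrality-graphon-norm} specialized to $p=2$ with the graphon analog of the operator-to-cut-norm inequality that is used inside the proof of Lemma \ref{lem:operator-cut-mat}. Proposition \ref{prop:centrality-graphon-norm} with $p=2$ already yields
\[
W_2(\rho_\FA, \rho_\FB) \leq \frac{L_1 L_g}{1-L_0(\FA)} \inf_{\phi \in \Phi} \|\FA^\phi - \FB\|_{\textup{op}, 2},
\]
so all that remains is to replace the operator norm on the right-hand side by $\sqrt{\|\cdot\|_\Box}$ up to a universal constant.

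First I would fix a measure-preserving bijection $\phi \in \Phi$ and consider the symmetric measurable kernel $\FA^\phi - \FB$ on $[0,1]^2$. Since $\FA, \FB \in \ESO$, this kernel is uniformly bounded, so the graphon inequality
\[
\|\FA^\phi - \FB\|_{\textup{op}} \leq \sqrt{8 \,\|\FA^\phi - \FB\|_\Box}
\]
follows directly from the graphon result cited in the proof of Lemma \ref{lem:operator-cut-mat} (namely \cite[Lem. E.6 and Eqn. (4.4)]{janson2010graphons}, or equivalently \cite[Lem. 7]{avella2018centrality}). This is precisely the graphon-level ingredient that was previously used inside the proof of Lemma \ref{lem:operator-cut-mat} to obtain its matrix version; in the present setting it can be invoked directly on the graphon $\FA^\phi - \FB$, without lifting through any stepfunction approximation.

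Combining the two bounds, for every $\phi \in \Phi$,
\[
W_2(\rho_\FA, \rho_\FB) \leq \frac{L_1 L_g}{1-L_0(\FA)} \sqrt{8 \,\|\FA^\phi - \FB\|_\Box}.
\]
Because the square-root function is monotone increasing, the infimum over $\phi \in \Phi$ can be pulled inside the radical on the right-hand side, which produces $\sqrt{8\, \delta_\Box(\FA, \FB)}$ and hence the claimed bound.

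The main obstacle I expect is constant book-keeping: the kernel $\FA^\phi - \FB$ can have sup-norm up to $2$ rather than $1$, so one either absorbs an extra factor via the rescaling $\tfrac12(\FA^\phi - \FB) \in \ESO$ or tracks the dependence on $\|\cdot\|_\infty$ in the cited inequality explicitly, entirely in parallel with how the constant $8$ is carried through in the matrix counterpart Proposition \ref{prop:cut-matrix-bound}. Permutation equivariance of $f$ and $g$ (which yields $\rho_{\FA^\phi} = \rho_\FA^\phi$ and $L_0(\FA^\phi) = L_0(\FA)$) together with the measure-preserving invariance underlying $W_2$ are used implicitly, exactly as in the proofs of Propositions \ref{prop:w2-op} and \ref{prop:cut-matrix-bound}.
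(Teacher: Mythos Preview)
Your proposal is correct and matches the paper's approach exactly: the paper simply states that the proof follows the same arguments as Propositions~\ref{prop:w2-op} and~\ref{prop:cut-matrix-bound}, which is precisely your combination of Proposition~\ref{prop:centrality-graphon-norm} at $p=2$ with the graphon operator-to-cut-norm inequality from the proof of Lemma~\ref{lem:operator-cut-mat}. Your observation about the sup-norm of $\FA^\phi-\FB$ potentially reaching $2$ is a legitimate bookkeeping point that the paper itself leaves implicit in the matrix case as well.
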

Proofs follow similar arguments as those in Prop.\ref{prop:w2-op} and  \ref{prop:cut-matrix-bound}.
\begin{remark}
Any finite undirected graphs can be represented by stepfunction graphons \cite[Chp.7.1]{lovasz2012large} and hence the characterization of  centrality variations applies to finite graphs as well. Moreover,  finite graphs of different size can be compared via their graphon representations as well as the associated fixed-point centralities. Thus, for the undirected graph case, the results above in Prop. \ref{prop:centrality-graphon-norm} and \ref{prop:centrality-graphon-metric} generalize those in Prop. \ref{prop:w2-op} and \ref{prop:cut-matrix-bound}.%
\end{remark}
\section{Conclusion}
The notion of the fixed-point centrality proposed in the current paper is useful in at least the following ways: (a)  it helps identify properties for a large family of centralities and apply similar analysis techniques (e.g. in studying changes of centralities with respect to graph perturbations); (b) the well-established theoretical and numerical results of fixed-point analysis can be readily employed for such centralities; (c) learning and training methods can be readily applied to approximate fixed-point centralities due to its close connection with graph neural networks (\cite{gori2005new,scarselli2009graph}).

 The connection of fixed-point centrality with LQG  mean field games on networks suggests collective multi-agent learning of centralities from the equilibrium cost for (dynamic or repeated) game problems. % 
Fixed-point centralities are also related to certain Markov decision processes if each state is viewed as a node and the value function (typically characterized by the fixed-point of the Bellman operator) is then a mapping from the vertex set to non-negative real number.  Details will be discussed in future extensions.

 The representation of sparse graph sequences and  limits requires extra concepts (e.g. graphings for bounded degree graphs \cite{lovasz2012large} and  $L^p$ graphon for sparse  $W$-random graphs \cite{borgs2018p}). Future extensions should  formulate fixed-point centralities for sparse graph limit models.
Other important future directions include: 
(a) improving upper bounds for centrality variations by exploring further properties of the permutation equivariant mappings;  (b) axoimatizing fixed-point centralities via extra properties of $f(\cdot, \cdot)$, $g(\cdot)$, and the feature space $S$ similar to that in \cite{bloch2016centrality};
(c) analyzing the change of the ranking properties of fixed-point centralities with respect to modification on networks;
%(c) employing fixed point analysis for general set-valued mappings for fixed-point centralities with general set-valued features; 
(d) exploring variational analysis of fixed-point centralities where the underlying graphs are characterized by vertexon-graphons \cite{PeterVertexon}.

\bibliographystyle{IEEEtran}
\bibliography{mybib}
\end{document}